\theoremstyle{plain}
\title{An Algebraic Approach to The Longest Path Problem}
\author{Omar Al - Khazali, b00088673@aus.edu}
\affil{Department of Mathematics \& Statistics, American University of Sharjah Department of Computer Science, American University of Sharjah}
\date{}
\begin{document}

\maketitle
\newtheorem*{theorem*}{Theorem}
\newtheorem*{remark*}{Remark}
\newtheorem*{remark}{Remark}
\newtheorem*{empty*}{}
\newtheorem*{reference*}{Reference}
\newtheorem{theorem}{Theorem}
\newtheorem{lemma}{Lemma}
\newtheorem{proposition}{Proposition}
\newtheorem{scolium}{Scolium}   %% And a not so common one.
\newtheorem{definition}{Definition}
\newenvironment{AMS}{}{}
\newenvironment{keywords}{}{}
\newtheorem{sublemma}{Lemma}[section]
\newtheorem{corollary}{Corollary}[theorem]

\providecommand{\keywords}[1]{\textbf{\textit{Index terms---}} #1}

\begin{abstract}
    The Longest Path Problem is a question of finding the maximum length between pairs of vertices of a graph. In the general case, the problem is $\mathcal{NP}$-complete. However, there is a small collection of graph classes for which there exists an efficient solution. Current approaches involve either approximation or computational enumeration. For Tree-like classes of graphs, there are approximation and enumeration algorithms which solves the problem efficiently. Despite this, we propose a new method of approaching the longest path problem with exact algebraic solutions that give rise to polynomial-time algorithms. Our method provides algorithms that are proven correct by their underlying algebraic operations unlike existing purely algorithmic solutions to this problem. We introduce a `booleanize' mapping on the adjacency matrix of a graph which we prove identifies the solution for trees, uniform block graphs, block graphs, and directed acyclic graphs with exact conditions and associated polynomial-time algorithms. In addition, we display additional algorithms that can generate every possible longest path of acyclic graphs in efficient time, as well as for block graphs.
\end{abstract}
$\textbf{\textit{Keywords---}}$Longest Path Problem, Algebraic Graph theory, Algorithms, Combinatorics

\section{Introduction}
The Longest Path Problem (\textbf{LPP}) is a well known challenge in combinatorial optimization which is directly linked to the prominent Hamiltonian Path Problem (\textbf{HPP}). The ability to solve \textbf{LPP} in polynomial time implies the existence of efficient solutions of \textbf{HPP}. Specifically, for a graph of order $n$ if there exists a path of length $n - 1$ then there exists a Hamiltonian path. More generally, $\textbf{HPP}$ presents itself as a special case of $\textbf{LPP}$ \cite{Schrijver}. Overall, the aim is to solve \textbf{LPP} on a certain family of graphs in polynomial-complexity, either by estimation to rule out candidate solutions, or to find the exact solution for a given class. The main approaches to solving \textbf{LPP} involve approximation algorithms, or computational enumeration/parametrized complexity.

There are existing efficient and optimal solutions, with more developments on further classes of graphs. W. Dijkstra's (longest path) algorithm works on Trees in $O(n)$ time as was proven by R.W. Bulterman et al \cite{Bulterman}. In addition, a generalization of Dijkstra's algorithm was introduced by R. Uehara and Y. Uno to solve weighted Trees in $O(n)$, interval biconvex graphs in $O(n^3(m + n\log n))$, block graphs in $O(n + m)$, ptolemaic graphs in $O(n(n+m))$, cacti graphs in $O(n^2)$, and threshold graphs in $O(n + m)$ \cite{Uehara}. Moreover,  K. Ioannidou and S. D. Nikolopoulos solved cocomparability graphs in $O(n^7)$ \cite{Ioannidou_co}, and with G. B. Mertzios solved interval graphs in $O(n^4)$ \cite{Ioannidou_in}. Furthermore, Y. Guo, C. Ho, M. Ko discovered a polynomial time algorithm on distance-hereditary graphs in $O(n^4)$ \cite{Guo}. Whereby n and m are the order and size of the graph respectively.

We introduce a separate approach to \textbf{LPP} which largely focuses on algebraic conditions that exactly identify the length of the longest path. In particular, algebraic operations which are computable in polynomial time implies that any polynomial-sequence of these algebraic operations would be guaranteed to be in polynomial time. Hence, this approach is unlikely to achieve optimal solutions. However, it does provide the ability to find polynomial time solutions to larger classes of graphs without the requirement of weight or distance functions, nor a constraint to be defined on strictly undirected graphs. Purely algorithmic methods, like Dijkstra's longest path algorithm, are not straightforward to prove correctness of the algorithm. It took many decades before we have seen R. W. Bulterman et al's proof of correctness for Dijkstra's algorithm \cite{Bulterman}.

The use of algebraic techniques that can then easily form into algorithms provides us with a straightforward technique to find polynomial-time solutions to \textbf{LPP} that are concretely and rigorously proven correct unlike existing methods. To the author's knowledge, there does not exist a purely algebraic solution to \textbf{LPP} for any large class of graphs. In this paper, we present a map eluding to algebraic conditions to find the exact solution to \textbf{LPP} for directed acyclic graphs, trees, uniform block graphs, and block graphs in polynomial complexity. Furthermore, we add a section on path-generating algorithms to find every path of the longest length, which is not possible with purely algorithmic methods. Finally, we analyze the computational complexity of the mentioned algorithms, and provide a final remark.

\section{Preliminaries}
A graph $\Gamma = (\mathbb{V}(\Gamma), \mathbb{E}(\Gamma))$ is a pair with a vertex set $\mathbb{V}$ and edge set $\mathbb{E} \subseteq \mathbb{V} \times \mathbb{V}$ respectively. We take $\Gamma$ to be finite, simple, connected, and undirected unless otherwise stated. A walk, $\omega$, is a sequence $(v_1, v_2, ..., v_n)$ where every two adjacent vertices are neighbors such that $\{v_i, v_{i+1}\} \in \mathbb{E}(\Gamma)$. Note that the vertices of a walk need not be distinct. Similarly, a path $\gamma$ is a sequence of vertices $(v_1, v_2, ..., v_n)$ where every two adjacent vertices in $\gamma$ are adjacent in $\Gamma$ and every $v_i \in \gamma$ is visited only once. For the lengths of walks and paths, we denote them by $\mathfrak{L}(\omega)$ and $\mathscr{L}(\gamma)$ respectively such that the length is the number of edges traversed. Finally, we denote the set of neighboring vertices of a fixed vertex $v$ by $N(v) = \{u \in \mathbb{V}\ |\ \{u, v\} \in \mathbb{E}\}$

For graph classes, we recall that a connected graph is a Tree if and only if there are no cycle subgraphs. Or equivalently, there is a unique path $\gamma_{u,v}$ between every distinct $u,v \in \mathbb{V}$. Furthermore, recall that a Block Graph is a connected graph with vertex-adjacent completely-connected subgraphs (cliques), such that $\exists\ m_i > 2$ so that $B_i \cong K_m\ \forall\ B_i \subseteq \Gamma \textit{ maximal 2-connected}$. Recall that a \textit{maximal 2-connected} subgraph, also known as a biconnected component, is a maximal subgraph so that if any one vertex is cut, then the subgraph stays connected. As a special case, we define a Uniform Block Graph to be a Block Graph where all blocks have the same clique order, in other words $\exists!\ m_i > 2 : B_i \cong K_m\ \forall\ B_i \subseteq \Gamma \textit{ maximal 2-connected}$. Finally, a Directed Acyclic Graph (DAG) is a directed graph, $\mathcal{D}$, with a unique path between every two distinct vertices.

In recollection, the adjacency matrix $A(\Gamma)$ is an $\mathbb{M}_{|\mathbb{V}| \times |\mathbb{V}|}$ matrix where $A(\Gamma)_{i,j}$ is $1$ if $v_i, v_j$ are connected by an edge, and otherwise $0$. We denote the distance between two vertices as $\partial(u,v)$ which is the length of the smallest path between $u, v \in \mathbb{V}$. Finally, $\omega(\Gamma)$ is defined to be the order of the largest completely connected subgraph of $\Gamma$. As a final note of clarificaiton, we denote $\mathbb{Z}_n = \{0, 1, 2, ..., n-1\}$ and $\mathbb{Z}_n^{*} = \mathbb{Z}_n \setminus \{0\}$. Now we define a map to ``booleanize" a matrix to a (0,1)-matrix by the following.
\begin{definition}
    Define a map, $\beta$, to `booleanize' any matrix to a (0,1)-matrix as follows
$$\beta : \mathbb{M}_{n \times n} \longrightarrow \mathbb{M}_{n \times n}^{(0,1)}$$
\[
    \beta(Q)_{i,j} := \begin{cases} 1,& Q_{i,j} \neq 0\\
    0,&  Q_{i,j} = 0\\ \end{cases}
\]
reducing non-zero valued entries to 1, and zero valued entries kept as 0 to transform the matrix $A$ to a binary-entry matrix $\beta(Q)_{i,j} \in \{0,1\} \ \forall\ i,j \in \mathbb{Z}_{n}$
\end{definition}
\noindent The interest is to use this transformation on results of multiplicative matrix operations that are required for our algebraic solution of \textbf{LPP}, leading to the definition of longest path length-finding algorithms. Currently, the matrix multiplication algorithm with the optimal complexity is in $O(n^{2.3728596})$. However, given that the algorithm is galactic, in realistic implementations, Strassen's algorithm would be used which is in $O(n^{log_27}) \simeq O(n^{2.807})$, but we assume optimal asymptotic complexity \cite{Duan}. With the goal of minimizing the complexity of the booleanization operation, the method of computation is detailed below for the coming algorithms. Hence, for the purposes of the problem tackled in this paper, we introduce a faster method of computing $\beta(A^2)$ by the following derivation.
%\newpage
\begin{lemma}
    Let $\mathbb{Z}_2^{n}$ be binary numbers with a logical-and operation $\land_l$ such that $(a \land_l b)_i := a_i \land_l b_i$. Let $f$ be a natural map from a matrix row or column vector (of 0's and 1's) to $\mathbb{Z}_2^{n}$. Then, the booleanized product can be computed equivalently by
    $$\beta(A(\Gamma)^2)_{i,j} =  
\begin{cases} 1,& f(A(\Gamma)_{i,-}) \land_l f(A(\Gamma)_{-,j}) > 0\\
    0,&  f(A(\Gamma)_{i,-}) \land_l f(A(\Gamma)_{-,j}) = 0\\ \end{cases}$$
\end{lemma}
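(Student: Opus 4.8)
The plan is to reduce both sides of the claimed case distinction to one and the same existential statement over the summation index $k$, exploiting that $A(\Gamma)$ is a nonnegative $(0,1)$-matrix so that no cancellation can occur inside the product $A(\Gamma)^2$. In other words, I will show that each of the two conditions ``$(A(\Gamma)^2)_{i,j}\neq 0$'' and ``$f(A(\Gamma)_{i,-}) \land_l f(A(\Gamma)_{-,j}) > 0$'' is equivalent to $\exists\, k : A(\Gamma)_{i,k} = A(\Gamma)_{k,j} = 1$, and the equality of the two piecewise definitions then follows entrywise.

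First I would unwind the definition of matrix multiplication: for fixed $i,j$ we have $(A(\Gamma)^2)_{i,j} = \sum_{k} A(\Gamma)_{i,k}\, A(\Gamma)_{k,j}$, a sum of nonnegative integers. Because every summand is $\geq 0$, the sum is strictly positive if and only if at least one summand is nonzero, i.e. if and only if there exists an index $k$ with $A(\Gamma)_{i,k} = A(\Gamma)_{k,j} = 1$. Combining this with the definition of $\beta$, and using that a nonnegative integer is nonzero exactly when it is positive, gives $\beta(A(\Gamma)^2)_{i,j} = 1 \iff \exists\, k : A(\Gamma)_{i,k} = A(\Gamma)_{k,j} = 1$.

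Next I would analyze the right-hand side. By definition of the natural map $f$, the $k$-th coordinate of $f(A(\Gamma)_{i,-})$ is $A(\Gamma)_{i,k}$ and the $k$-th coordinate of $f(A(\Gamma)_{-,j})$ is $A(\Gamma)_{k,j}$. Hence the $k$-th coordinate of the componentwise conjunction $f(A(\Gamma)_{i,-}) \land_l f(A(\Gamma)_{-,j})$ equals $A(\Gamma)_{i,k} \land_l A(\Gamma)_{k,j}$, which, since both bits lie in $\{0,1\}$, is $1$ precisely when $A(\Gamma)_{i,k} = A(\Gamma)_{k,j} = 1$. Reading the resulting binary vector being $> 0$ as ``not the all-zero vector,'' this conjunction is positive if and only if some coordinate equals $1$, i.e. again if and only if there exists $k$ with $A(\Gamma)_{i,k} = A(\Gamma)_{k,j} = 1$.

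Finally I would observe that the two displayed conditions are now literally the same existential statement, so the case split defining $\beta(A(\Gamma)^2)_{i,j}$ agrees entrywise with the case split on the right-hand side, which is the claim. I do not anticipate a genuine obstacle here; the only point requiring care is the \emph{no-cancellation} argument, namely that positivity of the integer sum is equivalent to the disjunction (a logical OR over $k$) of the individual conjunctions. This is exactly where nonnegativity of a $(0,1)$-matrix is used, and it is what licenses replacing the arithmetic product by a componentwise logical-and without altering the booleanized value.
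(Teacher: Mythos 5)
Your proposal is correct and follows essentially the same route as the paper's own proof: both reduce $\beta(A(\Gamma)^2)_{i,j}=1$ and the positivity of the coordinatewise logical-and to the single existential condition $\exists\,k$ with $A(\Gamma)_{i,k}=A(\Gamma)_{k,j}=1$, using nonnegativity of the $(0,1)$-entries to rule out cancellation. Your explicit emphasis on the no-cancellation point is a slightly more careful phrasing of what the paper expresses via the $\min\{1,\sum_k X_{i,k}Y_{k,j}\}$ formulation, but the argument is the same.
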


\begin{proof}
Let $a, b \in \mathbb{Z}_2^{n}$ and $X, Y \in \mathbb{M}_{n \times n}^{(0,1)}$, we see that
$$(a \land_l b) = 0 \iff a_k = 0 \text{ or } b_k = 0\ \forall\ k \in \mathbb{Z}_{n+1}^{*}$$
$$(a \land_l b) > 0 \iff \exists\ k \in \mathbb{Z}_{n+1}^{*} \text{ so that } a_k = 1 \text{ and } b_k = 1$$
Therefore, modelling it in terms of booleanized matrices, we have the same structure
\begin{align*}
    \beta(X \times Y)_{i,j} &= min\left\{1, \sum_{k=1}^nX_{i,k} \cdot Y_{k,j}\right\}\\
    \beta(X \times Y)_{i,j} = 0 &\iff X_{i,k} = 0 \text{ or } Y_{k,j} = 0\ \forall\ k \in \mathbb{Z}_{n+1}^*\\
    \beta(X \times Y)_{i,j} = 1 &\iff \exists\ k \in \mathbb{Z}_{n+1}^* \text{ so that } X_{i,k} > 0 \text{ and } Y_{k,j} > 0
\end{align*}
Since adjacency matrices are of (0,1)-entries, then we can map a row (or column) of the matrix to a binary number by the natural map, $f$, meaning there are binary numbers $a, b \in \mathbb{Z}_2^{n}$ so that $f(A(\Gamma)_{i,-}) = a$ and $ f(A(\Gamma)_{-,j}) = b$. Where  $A(\Gamma)_{i,-}$ and $ A(\Gamma)_{-,j}$ are the $i^{th}$ Row and $j^{th}$ Column vectors respectively. So we can thake the operation of multiplying a row and column of a matrix as the logical-and operation in binary numbers (after booleanizing). Noting that $f$ is a natural map taking 0 to 0 and any other number to 1 whereby $A(\Gamma)_{i,k}$ maps to $a_k$ and $A(\Gamma)_{k,j}$ maps to $b_k$. Hence we deduce
\[
\beta(A(\Gamma)^2)_{i,j} = min \left\{1, \sum_{k=1}^nA(\Gamma)_{i,k} \cdot A(\Gamma)_{k,j}\right\} = 
\begin{cases} 1,& f(A(\Gamma)_{i,-}) \land_l f(A(\Gamma)_{-,j}) > 0\\
    0,&  f(A(\Gamma)_{i,-}) \land_l f(A(\Gamma)_{-,j}) = 0\\ \end{cases}
\]
\end{proof}
Given that bitwise operations are computable in $O(1)$ or some $O(\delta)$ where $\delta$ depends on the number of bytes of an integer. That reduces the complexity of computing the booleanization of a matrix product to $O(\delta n^{2})$. Allowing us to reduce the complexity of the algorithms pertaining to the length of the longest path. Specifically referring to algorithms other than Algorithm 1, which will be discussed in Section 4 and 5. While Algorithm 1 defines the process of using the above method to compute a booleanized matrix product.

The motivation for this is to take the adjacency matrix of a given graph, $A(\Gamma)$ and compute its powers $A(\Gamma)^n$ for integers $n$ where we find some condition or point at which we can conclude what the longest-path's length is. The alternative to this method would be to compute the result of the matrix product $A^k$ then booleanizing it with $\beta(A^k)$ which would be of complexity $O(n^{2.807}\cdot log_2(k))$ assuming Strassen's algorithm and using the binary exponentiation technique. However, we will see that this method would yield to a faster algorithm for the same result. Below, we define the algorithm to compute $\beta(A(\Gamma_1) \times A(\Gamma_2))$, assuming that they are of the same order.

\newpage

\begin{algorithm}
\caption{Booleanization Algorithm \& Computing $\beta(A(\Gamma_1) \times A(\Gamma_2))$}\label{euclid}
\begin{algorithmic}[1]
\Require $\Gamma_1 := (\mathbb{V}(\Gamma), \mathbb{E}(\Gamma_1))$
\Require $\Gamma_2 := (\mathbb{V}(\Gamma), \mathbb{E}(\Gamma_2))$
\For{$i \gets 1, ..., |\mathbb{V}(\Gamma)|$}
\For{$j \gets 1, ..., |\mathbb{V}(\Gamma)|$}
\If{$A(\Gamma_1)_{i,j} = 1 \land A(\Gamma_2)_{j,i} = 1$}
\State $A(\Gamma')_{i,-} \gets A(\Gamma')_{i,-} + 2^j$
\State $A(\Gamma')_{-,j} \gets A(\Gamma')_{-,j} + 2^i$
\EndIf
\EndFor
\EndFor
\For{$i \gets 1, ..., |\mathbb{V}(\Gamma)|$}
\For{$j \gets 1, ..., |\mathbb{V}(\Gamma)|$}
\If {$A(\Gamma)_{i,-} \land A(\Gamma)_{-,j} > 0$}
\State $R_{i,j} \gets 1$
\EndIf
\EndFor
\EndFor
\State $\textbf{return} \ R$
\end{algorithmic}
\end{algorithm}

Computing the complexity of this algorithm is straightforward and defined in Section 6, Corollary 4.2. We find that it is of complexity $O(\delta n^2)$ for some small $\delta$ dependant on the choice of interpreting the computation of logical-and.

\begin{lemma}[N. Biggs \cite{Biggs}] The number of walks from $v_i$ to $v_j$ can be found by
$$|\{\omega = (v_i, ...,v_j)\ |\ \mathfrak{L}(\omega) = k : v_i, v_j \in \mathbb{V}(\Gamma)\}| = A(\Gamma)^k_{i,j}$$
\end{lemma}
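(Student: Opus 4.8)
The plan is to prove this by induction on the walk length $k$, exploiting the fact that matrix multiplication $A^{k+1} = A^k \cdot A$ mirrors the way a walk of length $k+1$ decomposes into a shorter walk followed by a single edge. First I would fix the base case. For $k = 1$, a walk $\omega = (v_i, v_j)$ of length $1$ from $v_i$ to $v_j$ exists if and only if $\{v_i, v_j\} \in \mathbb{E}(\Gamma)$, and in that case it is unique; thus the count of such walks is exactly $A(\Gamma)_{i,j} = A(\Gamma)^1_{i,j}$ by the definition of the adjacency matrix. (One may equally take $k = 0$ as the base, where the only walk is the trivial walk at $v_i$, matching $A(\Gamma)^0 = I$.)

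For the inductive step, I would assume the claim holds for some fixed $k \geq 1$, so that $A(\Gamma)^k_{i,\ell}$ counts walks of length $k$ from $v_i$ to $v_\ell$ for every $\ell$. The key structural observation is that any walk of length $k+1$ from $v_i$ to $v_j$ is uniquely obtained by first taking a walk of length $k$ from $v_i$ to some intermediate vertex $v_\ell \in N(v_j)$, and then traversing the edge $\{v_\ell, v_j\}$. I would make this precise by establishing a bijection between the set of walks $(v_i, \dots, v_\ell, v_j)$ of length $k+1$ and the set of pairs consisting of a length-$k$ walk $(v_i, \dots, v_\ell)$ together with a final edge $\{v_\ell, v_j\} \in \mathbb{E}(\Gamma)$. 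Summing over all possible penultimate vertices $v_\ell$ then gives the count
\begin{equation*}
\sum_{\ell=1}^{|\mathbb{V}(\Gamma)|} A(\Gamma)^k_{i,\ell} \cdot A(\Gamma)_{\ell,j},
\end{equation*}
where the factor $A(\Gamma)_{\ell,j}$ acts as the indicator that $\{v_\ell, v_j\}$ is an edge. This sum is precisely the $(i,j)$ entry of the product $A(\Gamma)^k \cdot A(\Gamma) = A(\Gamma)^{k+1}$, which closes the induction.

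The main obstacle, though it is conceptual rather than computational, is justifying the bijection cleanly: I must confirm that every length-$(k+1)$ walk decomposes in exactly one way into a length-$k$ walk plus a terminal edge, and conversely that every such pair concatenates to a valid walk. Since walks permit repeated vertices, no disjointness or distinctness conditions complicate the concatenation, so the correspondence is genuinely one-to-one without any correction terms. I would emphasize that it is essential to use \emph{walks} rather than \emph{paths} here, since the repetition freedom is exactly what makes the decomposition exhaustive and makes the summation collapse to ordinary matrix multiplication.
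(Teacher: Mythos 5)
Your induction is correct and is the standard argument for this classical fact; the paper itself offers no proof, simply citing Biggs, and the proof in that reference is precisely the decomposition you give (a length-$(k+1)$ walk splits uniquely into a length-$k$ walk plus a terminal edge, so the count is the $(i,j)$ entry of $A(\Gamma)^k \cdot A(\Gamma)$). Nothing is missing; your care in noting that the bijection works because walks allow repeated vertices is exactly the right point to emphasize.
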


Using Lemma 2 with the booleanization map leads to us having a description on whether a path of length $n$ exists between two vertices $v_i, v_j \in \mathbb{V}$. In particular, Lemma 3 would state exactly when we can find a path of length $n$. However, first, a remark
\begin{remark}
    If there exists a walk $\omega = (v_i, ..., v_j)$ then there exists a path $\gamma = (v_i, ..., v_j)$
\end{remark}
Which is clear enough as is. Since, if there exists a sequence of adjacent vertices from $v_i$ to $v_j$. Then, there must exist a minimal-walk with the least number of elements that will be equal to a path. We use this fact for all our results, and leads to another clear remark that comes together as a consequence of this fact.
\begin{remark}
    If $A(\Gamma)^k_{i,j} \neq 0$ \textnormal{(}i.e. $\beta(A(\Gamma)^{k})_{i,j} = 1$\textnormal{)}. Then, there must be a path from $v_i$ to $v_j$
\end{remark}
Therefore, the idea behind booleanizing the adjacency matrix taken to a certain power is to precisely see if there even exists a path from $v_i$ to $v_j$. We can then use this to evaluate what the length of the longest path would be. Lemma 3 specifies this fact further for Trees where we bound the length of the walk to only certain possible lengths for a given path.
\begin{lemma}
Let $\Gamma$ be a simple, undirected, finite tree. Then, 
$$\beta(A(\Gamma)^n)_{i,j} = 1 \iff \exists\ \gamma = (v_i, ..., v_j) \text{ and } n \in \{\mathscr{L}(\gamma) + 2k \ | \ k \in \mathbb{N}\}$$
\end{lemma}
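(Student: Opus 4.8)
The plan is to translate the matrix condition into a statement about walks via Lemma 2, and then reduce everything to the unique-path structure of trees. By Lemma 2 together with the definition of $\beta$, the equation $\beta(A(\Gamma)^n)_{i,j} = 1$ is equivalent to $A(\Gamma)^n_{i,j} \neq 0$, which by Remark holds precisely when there exists at least one walk $\omega = (v_i, \dots, v_j)$ with $\mathfrak{L}(\omega) = n$. Since $\Gamma$ is a tree, there is a unique path $\gamma$ between $v_i$ and $v_j$, and its length equals the distance $d := \partial(v_i, v_j) = \mathscr{L}(\gamma)$. The statement to be proved therefore becomes: a walk of length $n$ from $v_i$ to $v_j$ exists if and only if $n \geq d$ and $n \equiv d \pmod 2$, i.e. $n \in \{d + 2k \mid k \in \mathbb{N}\}$.

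For the reverse direction I would construct an explicit walk of each admissible length. Starting from the unique path $\gamma$ of length $d$, one obtains a walk of length $d + 2k$ by appending $k$ back-and-forth traversals of a single edge incident to $v_j$ (such an edge exists whenever $\Gamma$ has more than one vertex; the degenerate single-vertex case forces $n = d = 0$, and $k=0$ recovers $\gamma$ itself). Each detour returns the walk to $v_j$ while adding exactly $2$ to its length, producing a walk of length $n = d + 2k$, hence $A(\Gamma)^n_{i,j} \neq 0$.

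The forward direction carries the real content, and the key is a parity argument exploiting the acyclicity of $\Gamma$. Given any walk $\omega$ of length $n$ from $v_i$ to $v_j$, for each edge $e \in \mathbb{E}(\Gamma)$ let $c(e)$ count how many times $\omega$ traverses $e$ in either direction, so that $n = \sum_{e} c(e)$. Deleting a single edge $e$ splits the tree into exactly two components, and since every traversal of $e$ switches which component the walk currently occupies, the endpoints $v_i, v_j$ lie in different components precisely when $e$ belongs to the unique path $\gamma$. This forces $c(e)$ to be odd for $e \in \gamma$ and even for $e \notin \gamma$. Summing, the $d$ edges of $\gamma$ contribute a total of parity $d \pmod 2$ while the remaining edges contribute an even total, so $n \equiv d \pmod 2$; moreover each of the $d$ path-edges satisfies $c(e) \geq 1$, giving $n \geq d$. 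Hence $n \in \{d + 2k \mid k \in \mathbb{N}\}$.

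The main obstacle is establishing the parity claim rigorously. The cleanest route is the edge-cut observation above, which simultaneously delivers both the congruence $n \equiv d \pmod 2$ and the lower bound $n \geq d$; an equivalent but slightly less self-contained alternative is to note that a tree is bipartite (having no cycles, hence no odd cycle) and that in any bipartite graph every walk between two fixed vertices has length of fixed parity. I would present the edge-cut version, since it keeps the argument internal to the tree structure already introduced in the Preliminaries.
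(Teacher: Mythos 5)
Your proposal is correct, and its overall shape (reduce the matrix condition to walk-counting via Lemma 2, then argue that a walk of length $n$ between $v_i$ and $v_j$ exists iff $n \geq d$ and $n \equiv d \pmod 2$) matches the paper's. The difference is in how the forward direction is justified. The paper asserts the parity constraint by noting that $\Gamma$ is acyclic, writing $girth(\Gamma) > 3$, and observing that returning to a vertex costs two steps; as stated this is only a heuristic, and the cited condition is not even the right one (a $5$-cycle has girth greater than $3$ yet admits walks of both parities between adjacent vertices --- the property actually needed is bipartiteness or acyclicity). Your edge-cut argument closes this gap rigorously: counting traversals $c(e)$ of each edge, noting that deleting $e$ from a tree separates $v_i$ from $v_j$ exactly when $e$ lies on the unique path, and concluding that $c(e)$ is odd on path edges and even off them, you get both $n \equiv d \pmod 2$ and $n \geq d$ in one stroke. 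So your write-up is not merely an alternative route but a strictly more careful one; the paper's version buys brevity at the cost of rigor, while yours is self-contained and would survive scrutiny. The reverse directions are essentially identical (pad the unique path with back-and-forth detours), and you handle the degenerate one-vertex case, which the paper does not mention.
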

\begin{proof}$\implies$\newline
    $\Gamma$ acyclic implies that $girth(\Gamma) > 3$ meaning that $A(\Gamma)^n_{i,j} > 0$ iff there is a walk $\omega = (v_i, ..., v_j)$
    that has a length $\mathfrak{L}(\omega) = \partial(v_i,v_j) + 2k : k \in \mathbb{N}$. Given that to go from $v_j \longrightarrow v_j,$
    there must be two steps to loop back $v_j\xrightarrow{} u\xrightarrow{} v_j : u \in N(v_j)$. Therefore, since there is a unique path $\gamma = (u,...,v) \ \forall\ u,v \in \mathbb{V}(\Gamma)$ then,
    $\mathscr{L}(\gamma) = \partial(u,v)$ implies that $\mathfrak{L}(\omega) = n = \mathscr{L}(\gamma) + 2k$
    $\implies A(\Gamma)^n_{i,j} > 0 \iff \beta(A(\Gamma)^n)_{i,j} = 1 \iff n \in \left\{\mathscr{L}(\gamma) + 2k \ | \ k \in \mathbb{N}\right\}$\\
    
\noindent \textit{Proof. }$\Longleftarrow$\newline
    $\exists\ \gamma = (v_i, ..., v_j)$ and $n \in \{\mathscr{L}(\gamma) + 2k \ | \ k \in \mathbb{N}\}$ so there is a walk $\omega = (v_i, ..., v_j)$ such that 
    $\mathfrak{L}(\omega) = \partial(v_i,v_j) + 2k : k \in \mathbb{N}$ given that $girth(\Gamma) > 3$. Therefore since there is a unique path $ \gamma = (u,...,v)$ for every $u, v \in \mathbb{V}(\Gamma)$. Then, $\mathscr{L}(\gamma) = \partial(u,v) \implies \mathfrak{L}(\omega) = n = \mathscr{L}(\gamma) + 2k$
    giving us a walk $\omega = (v_i, ..., v_j)$ of length $n$, $ \mathfrak{L}(\omega) = n$. Therefore, $A(\Gamma)^n_{i,j} \neq 0$ so $\beta(A(\Gamma)^n)_{i,j} = 1$
\end{proof}

\newpage

\section{The Longest Path Problem}
Now that we have a description for when we can find a path of length $n$, we notice a pattern in the booleanized matrix powers where we find an upper limit for the longest paths. By finding the maximum possible power, $n \in \mathbb{N}$, where the matrix $\beta(A(\Gamma)^n)$ ``no longer changes'' it then indicates that we have reached the maximum path length possible. Specifically, we have
\begin{theorem}
Let $\Gamma$ be a Tree, then TFAE:
\begin{enumerate}
    \item The length of the Longest Path of $\Gamma$ is n
    \item $\beta(A(\Gamma)^{n+1}) = \beta(A(\Gamma)^{n-1}) :$ n is the minimum such number
\end{enumerate}
Equivalently, to compute the value,
The Length of the Longest Path is found by the following
\[
    \mathfrak{D}(\Gamma) = min \{n \ |\  \beta(A(\Gamma)^{n+1})  = \beta(A(\Gamma)^{n-1})\}
\]
\end{theorem}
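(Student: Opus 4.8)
The plan is to reduce everything to Lemma 3, which for a tree gives a clean parity-and-threshold description of each booleanized power. Since the unique path between $v_i$ and $v_j$ has length $\partial(v_i,v_j)$, Lemma 3 specializes to the statement that $\beta(A(\Gamma)^m)_{i,j} = 1$ exactly when $m \geq \partial(v_i,v_j)$ and $m \equiv \partial(v_i,v_j) \pmod{2}$. I would first record this restatement, since it converts the matrix identity in condition (2) into a purely combinatorial statement about the distances realized in $\Gamma$.

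Next I would compare the two matrices $\beta(A(\Gamma)^{n+1})$ and $\beta(A(\Gamma)^{n-1})$ entrywise. The crucial observation is that $n+1$ and $n-1$ share the same parity, so the parity requirement above is identical for both powers. Consequently, at any position $(i,j)$ whose distance $d = \partial(v_i,v_j)$ has the ``wrong'' parity, namely $d \equiv n \pmod{2}$, both entries are $0$ and trivially agree; whereas at positions of the correct parity the two entries reduce to the thresholds $[\,n+1 \geq d\,]$ and $[\,n-1 \geq d\,]$, which can disagree only when $d = n+1$ (the intermediate case $d = n$ being ruled out by parity). I expect this bookkeeping to be the crux of the argument. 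The upshot is that $\beta(A(\Gamma)^{n+1})$ and $\beta(A(\Gamma)^{n-1})$ differ precisely at those entries with $\partial(v_i,v_j) = n+1$, so the identity in (2) holds if and only if no pair of vertices of $\Gamma$ is at distance exactly $n+1$.

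Then I would invoke the structure of trees to close the argument. In a tree the longest path length $L$ equals the diameter $\max_{i,j}\partial(v_i,v_j)$, since the unique path between two vertices realizes their distance and hence the longest path is a diametral one. Fixing a diametral path $(u_0, u_1, \ldots, u_L)$, its initial subpaths from $u_0$ realize every distance $0, 1, \ldots, L$, while no pair of vertices exceeds distance $L$. Therefore the distance $n+1$ fails to occur exactly when $n+1 > L$, i.e. when $n \geq L$. Substituting this into the characterization from the previous paragraph yields $\{\,n \mid \beta(A(\Gamma)^{n+1}) = \beta(A(\Gamma)^{n-1})\,\} = \{\,n \mid n \geq L\,\}$, an up-set whose minimum is $L$. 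This simultaneously proves the equivalence of (1) and (2) (with $n = L$ forced to be the minimal witness) and the closed formula $\mathfrak{D}(\Gamma) = L$.

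The principal obstacle, I anticipate, is keeping the parity bookkeeping in the second step correct: one must recognize that the two relevant powers $n \pm 1$ agree in parity, which is why stabilization is detected by comparing powers \emph{two} apart rather than consecutive ones, and that the single surviving discrepancy is pinned exactly at distance $n+1$. A minor edge case worth verifying explicitly is the boundary $n-1 = 0$ together with the diagonal entries; for instance, for $K_2$ one has $L = 1$ and compares $\beta(A(\Gamma)^2)$ with $\beta(A(\Gamma)^0) = I$. The parity argument handles the diagonal uniformly, since $\partial(v_i,v_i) = 0 \neq n+1$ for every $n \geq 1$, but I would confirm the $A(\Gamma)^0$ case to ensure the formula is valid at its lower boundary.
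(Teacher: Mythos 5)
Your proposal is correct and rests on the same pillar as the paper's own proof, namely Lemma 3's parity-and-threshold description of $\beta(A(\Gamma)^m)_{i,j}$ (equal to $1$ exactly when $m \geq \partial(v_i,v_j)$ and $m \equiv \partial(v_i,v_j) \pmod 2$). The difference is in execution. The paper's $1 \implies 2$ direction invokes the same observation that $\beta(A(\Gamma)^k)$ records path lengths $\{k \bmod 2,\ 2 + k \bmod 2,\ \ldots,\ k\}$, but its $2 \implies 1$ direction is essentially circular: it asserts that $\beta(A(\Gamma)^{k+1}) = \beta(A(\Gamma)^{k-1})$ forces $k > \mathfrak{D}(\Gamma)$ ``by $1 \implies 2$,'' when that implication only establishes the identity at $k = \mathfrak{D}(\Gamma)$ itself and says nothing about other $k$. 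Your entrywise comparison --- pinning the possible discrepancy between the two matrices exactly at entries with $\partial(v_i,v_j) = n+1$ (the case $d = n$ being excluded by parity), and then showing that the set of valid $n$ is the up-set $\{\,n \mid n \geq L\,\}$ where $L$ is the diameter --- supplies precisely the monotone equivalence $k \geq \mathfrak{D}(\Gamma) \iff \beta(A(\Gamma)^{k+1}) = \beta(A(\Gamma)^{k-1})$ that the paper states after the theorem but never actually proves. So your route is the same in spirit but strictly more complete on the minimality half, and your closing check of the $A(\Gamma)^0$ boundary case is a worthwhile addition the paper omits.
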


\noindent The logical interpretation is that if some matrix for a booleanized $A(\Gamma)^{n+1}$ is equal to a previously calculated booleanized $A(\Gamma)^{n-1}$, then the minimum $n \in \mathbb{N}_0$ for which that property is true is the longest path. Which follows with an equivalent statement whereby:
\[
    k \ge \mathfrak{D}(\Gamma) \iff \beta(A(\Gamma)^{k+1}) = \beta(A(\Gamma)^{k-1})
\]

\begin{proof}
\subsubsection*{$1 \implies 2$}
Assume $\mathfrak{D}(\Gamma) = n$. Then $\forall\ u,v \in \mathbb{V}(\Gamma)\ \nexists\ \gamma = (u, ..., v)  : \mathscr{L}(\gamma) = n+1$. Hence, by Lemma 2 and 3, we can see that between any pair of vertices we have a walk for any $u, v \in \mathbb{V}(\Gamma) \ \exists\ \omega = (u, ..., v) $ such that $ \mathfrak{L}(\omega) = n+1 \iff \exists\ \Omega \subset \omega $ where $ \mathfrak{L}(\Omega) = n-1$. Since $\beta(A(\Gamma)^k)$ shows the existence of a path of lengths $\{0 + k \bmod 2, 2 + k \bmod 2, ..., k\}$ then, $\beta(A(\Gamma)^{n+1}) = \beta(A(\Gamma)^{n-1})$.
\end{proof}
\begin{proof}
\subsubsection*{$2 \implies 1$}
Assume $\exists\ k : \beta(A(\Gamma)^{k+1}) = \beta(A(\Gamma)^{k-1})$ where $k$ is the minimum such number to satisfy this property. Additionally, assume that $\mathfrak{D}(\Gamma) = n$ where $k \ne n$. Deny the statement, let $k \ne n$. By $1 \implies 2$ we know that since $\beta(A(\Gamma)^{k+1}) = \beta(A(\Gamma)^{k-1})$ then $k > \mathfrak{D}(\Gamma)$. However, since $k$ is the minimum such number, and $k \ne n$, the assumption is false. Thus, $k = n$.
\end{proof}

Hence we achieve an algebraic condition that can be evaluated to show exactly the length of the longest path. By forming a data structure of ``potential optional lengths of longest path'' we can perform a binary search on them to efficiently search for the longest path. This is detailed further in Section 4 and 6. Theorem 1 allows us to know what the length of the longest path is. However, there is also interest in \textit{generating} the longest path. Hence, Corollary 1.1 shows one such method that can be performed in iterations that will allow us to generate such paths.

\begin{corollary}
    Let $\Gamma$ be a Tree and denote $\gamma_{a,b}$ to be the unique path between vertices $a$ and $b$ in $\Gamma$. Assume that $\beta(A(\Gamma)^{n})_{i,j} - \beta(A(\Gamma)^{n-2})_{i,j} = 1$. Then, 
    $$\beta(A(\Gamma)^{n-1})_{i,k} - \beta(A(\Gamma)^{n-3})_{i,k} = 1 \text{ and } A(\Gamma)_{j,k} = 1 \iff \gamma_{v_i, v_j} = \gamma_{v_i, v_k} \cup (v_j)$$
\end{corollary}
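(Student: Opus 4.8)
The plan is to convert every booleanized difference in the statement into an exact distance condition via Lemma 3, reducing the claim to an elementary fact about unique paths in a tree. In a tree the path $\gamma_{v_i,v_j}$ is unique, so $\mathscr{L}(\gamma_{v_i,v_j}) = \partial(v_i,v_j)$ and Lemma 3 says $\beta(A(\Gamma)^m)_{i,j} = 1$ precisely when $m \geq \partial(v_i,v_j)$ and $m \equiv \partial(v_i,v_j) \pmod 2$. Because each booleanized entry lies in $\{0,1\}$, a difference of $1$ between two such entries forces the larger power to equal $1$ and the smaller to equal $0$. I would first record the resulting ``gap identity'': $\beta(A(\Gamma)^{m})_{i,j} - \beta(A(\Gamma)^{m-2})_{i,j} = 1$ holds if and only if $\partial(v_i,v_j) = m$. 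Indeed $\beta(A(\Gamma)^{m})_{i,j}=1$ gives $m \geq \partial(v_i,v_j)$ with matching parity, while $\beta(A(\Gamma)^{m-2})_{i,j}=0$, together with the already-forced parity $m-2 \equiv \partial(v_i,v_j)$, forces $m - 2 < \partial(v_i,v_j)$; combining $\partial(v_i,v_j) \leq m$, $\partial(v_i,v_j) > m-2$, and $\partial(v_i,v_j) \equiv m \pmod 2$ pins down $\partial(v_i,v_j) = m$.

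Applying the gap identity twice, the corollary's hypothesis becomes $\partial(v_i,v_j) = n$, and the first conjunct of the left-hand side becomes $\partial(v_i,v_k) = n-1$, while $A(\Gamma)_{j,k}=1$ just says $v_j$ and $v_k$ are adjacent. So it remains to prove, under $\partial(v_i,v_j)=n$, that $\big[\partial(v_i,v_k)=n-1 \text{ and } v_j \sim v_k\big]$ holds if and only if $\gamma_{v_i,v_j} = \gamma_{v_i,v_k}\cup(v_j)$, a statement purely about appending $v_j$ to the unique $v_i$--$v_k$ path.

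For the forward direction I would take the unique path $\gamma_{v_i,v_k}$ of length $n-1$ and append the neighbor $v_j$, obtaining a walk from $v_i$ to $v_j$ of length $n$. The key step is to verify this walk is actually a path: if $v_j$ already lay on $\gamma_{v_i,v_k}$, then the $v_i$--$v_j$ sub-path would have length at most $n-1$, whence $\partial(v_i,v_j) \leq n-1$, contradicting $\partial(v_i,v_j)=n$. Hence $v_j \notin \gamma_{v_i,v_k}$, so $\gamma_{v_i,v_k}\cup(v_j)$ is a genuine $v_i$--$v_j$ path, and by uniqueness of paths in a tree it must equal $\gamma_{v_i,v_j}$. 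For the converse I would read the data off the equality $\gamma_{v_i,v_j} = \gamma_{v_i,v_k}\cup(v_j)$ directly: consecutive vertices of a path are adjacent, so the penultimate vertex $v_k$ satisfies $v_j \sim v_k$, i.e. $A(\Gamma)_{j,k}=1$; and the length bookkeeping $\mathscr{L}(\gamma_{v_i,v_j}) = \mathscr{L}(\gamma_{v_i,v_k}) + 1$ with $\mathscr{L}(\gamma_{v_i,v_j}) = n$ yields $\partial(v_i,v_k) = n-1$. Finally I would push both conclusions back through the gap identity to recover the booleanized form.

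I expect the main obstacle to be the bookkeeping in the gap identity, specifically justifying that the vanishing of $\beta(A(\Gamma)^{m-2})_{i,j}$ forces the strict inequality $\partial(v_i,v_j) > m-2$ rather than merely a parity failure, since Lemma 3 constrains reachable walk lengths only modulo $2$. The genuinely graph-theoretic content is light once the translation is in place; the only non-formal step is the contradiction argument ruling out a reappearance of $v_j$ on $\gamma_{v_i,v_k}$, which is exactly where the hypothesis $\partial(v_i,v_j)=n$ must be invoked.
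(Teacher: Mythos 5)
Your proposal is correct and follows essentially the same route as the paper's own proof: both translate the booleanized differences into exact path lengths via Lemma 3 ($n=\mathscr{L}(\gamma_{v_i,v_j})$ and $n-1=\mathscr{L}(\gamma_{v_i,v_k})$) and then glue using uniqueness of paths in a tree. You are in fact somewhat more careful at two points the paper passes over silently --- the parity argument forcing $\partial(v_i,v_j)>m-2$ rather than a mere parity failure, and the check that $v_j\notin\gamma_{v_i,v_k}$ so that appending $v_j$ really yields a path --- but these are refinements of, not departures from, the paper's argument.
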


\begin{proof}$\implies$\newline
    Assume $\beta(A(\Gamma)^{n})_{i,j} - \beta(A(\Gamma)^{n-2})_{i,j} = 1$ and that $A(\Gamma)_{j,k} = 1$. Let $\gamma_{a,b}$ denote the unique path between vertices $a$ and $b$. Hence, by Lemma 3, we know that if $\beta(A(\Gamma)^{n})_{i,j} = 1$, then $n \in \{\mathscr{L}(\gamma_{v_i,v_j}) + 2m \ |\ m \in \mathbb{N} \}$. Given that we assume $\beta(A(\Gamma)^{n})_{i,j} - \beta(A(\Gamma)^{n-2})_{i,j} = 1$. Then, it must be the case that $\beta(A(\Gamma)^{n})_{i,j} = 1$ and $\beta(A(\Gamma)^{n-2})_{i,j} = 0$. Therefore, $n-2 \notin \{\mathscr{L}(\gamma_{v_i,v_j}) + 2m \ |\ m \in \mathbb{N} \}$. Hence, we see that $n = \mathscr{L}(\gamma_{v_i,v_j})$. Now, assume that $\beta(A(\Gamma)^{n-1})_{i,k} - \beta(A(\Gamma)^{n-3})_{i,k} = 1 \text{ and } A(\Gamma)_{j,k} = 1$. Then, similarly we know that $n-1 = \mathscr{L}(\gamma_{v_i, v_k})$ and $\{v_k, v_j\} \in \mathbb{E}(\Gamma)$. Since $\Gamma$ is a tree, then there are unique paths for every pair of vertices. Hence, we can construct $\gamma_{v_i, v_j} = \gamma_{v_i, v_k} \cup (v_j)$.\\

\noindent \textit{Proof. }$\Longleftarrow$\newline
    Assume that $\beta(A(\Gamma)^{n})_{i,j} - \beta(A(\Gamma)^{n-2})_{i,j} = 1$ and $\gamma_{v_i, v_j} = \gamma_{v_i, v_k} \cup (v_j)$. Then, by the same reasoning as the first proof, we know that $n = \mathscr{L}(\gamma_{v_i,v_j})$. Since $\gamma_{v_i, v_j} = \gamma_{v_i, v_k} \cup (v_j) = (v_i, ..., v_k, v_j)$ then that means $\{v_k, v_j\} \in \mathbb{E}(\Gamma)$ so we know that $A(\Gamma)_{j,k} = 1$. Since $\mathscr{L}(\gamma_{v_i, v_k}) = \mathscr{L}(\gamma_{v_i, v_j}) - 1$ then $\mathscr{L}(\gamma_{v_i, v_k}) = n-1$. Therefore, $n-3 \notin \{\mathscr{L}(\gamma_{v_i, v_k}) + 2m\ |\ m \in \mathbb{N}\}$ implying that we have $\beta(A(\Gamma)^{n-1})_{i,k} - \beta(A(\Gamma)^{n-3})_{i,k} = 1$.
\end{proof}
By taking $n = \mathfrak{D}(\Gamma)$, we can then start generating the longest path beginning from the end points, and through $n$ iterations then find the overall path. This is the idea behind Algorithm 8 which is detailed later in Section 5 and 6. Similarly to Trees, we can also use a similar algebraic condition for Uniform Block Graphs. Recall that we consider Uniform Block Graphs to be a special case of Block Graphs so that every Block is a clique of the same size. First however, we find the exact length of the longest path of Uniform Block Graphs.
%\newpage

\begin{theorem}
Let $J_m$ be the $m \times m$ matrix of all 1's. Let $\Gamma$ be a Uniform Block Graph, then TFAE:
\begin{enumerate}
    \item The length of the Longest Path of $\Gamma$ is $n\cdot (\omega(\Gamma) - 1) = n \cdot (m - 1)$
    \item $\beta(A(\Gamma)^{n}) = J_{|\mathbb{V}(\Gamma)|} :$ n is the minimum such number
\end{enumerate}
The Length of the Longest Path is found by the following
\begin{align*}
    \mathscr{LP}(\Gamma) &= min \{n \ |\  \beta(A(\Gamma)^{n})  = J_{|\mathbb{V}(\Gamma)|}\} \cdot (\omega(\Gamma) - 1)
\end{align*}
\end{theorem}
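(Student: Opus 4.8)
The plan is to translate the matrix condition into a statement about walk lengths and then match it against the block structure. By Lemma 2, $A(\Gamma)^n_{i,j}$ counts the walks of length exactly $n$ from $v_i$ to $v_j$, so $\beta(A(\Gamma)^n)=J_{|\mathbb{V}(\Gamma)|}$ holds precisely when, for \emph{every} ordered pair $(i,j)$ (the diagonal included), there exists a walk of length exactly $n$ joining $v_i$ and $v_j$. Writing $D$ for the number of blocks lying on the longest chain of blocks in the block-cut tree of $\Gamma$, I would prove the two equalities $\min\{n:\beta(A(\Gamma)^n)=J\}=D$ and (longest path length) $=D(m-1)$. Once both are in hand, statement 2 asserts exactly $n=D$, while statement 1 together with $m-1>0$ forces $n=D$ as well, so the two are equivalent through the common value $n=D$.

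The heart of the argument, and the point where uniform block graphs depart from the tree case of Theorem 1, is the walk-length analysis. Because every block is a clique $K_m$ with $m>2$, every vertex lies on a triangle, so the parity obstruction that constrained walks in a tree (Lemma 3) disappears. Concretely I would argue: (i) for any vertex $v$, concatenating the length-$2$ closed walk $v\to y\to v$ and the length-$3$ triangle walk $v\to y\to z\to v$ inside its block yields closed walks of every length $\geq 2$, since $\gcd(2,3)=1$; and (ii) for distinct $u,v$ with $\partial(u,v)=d$, the shortest path realises length $d$, detouring the final in-block step $w\to v$ through a third clique vertex as $w\to x\to v$ realises length $d+1$, and appending back-and-forth steps covers every length $\geq d$. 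Hence the off-diagonal entry $(i,j)$ of $\beta(A(\Gamma)^n)$ is $1$ iff $n\geq \partial(v_i,v_j)$, and every diagonal entry is $1$ iff $n\geq 2$; therefore $\min\{n:\beta(A(\Gamma)^n)=J\}=\max(2,\operatorname{diam}\Gamma)$.

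It then remains to read off $\operatorname{diam}\Gamma$ and the longest-path length from the block structure. Using the block-cut tree I would show that the distance between two vertices equals the number of blocks on the unique chain of blocks joining them: within each traversed clique consecutive cut vertices are adjacent, contributing exactly one edge per block, so maximising over pairs gives $\operatorname{diam}\Gamma=D$. For the longest path, any simple path visits blocks in an order that is itself a simple path in the block-cut tree, hence meets at most $D$ blocks and, since the $m-1$ edges it can use inside distinct blocks are disjoint, has length at most $D(m-1)$; conversely, threading the longest block-chain and visiting all $m$ vertices of each clique between its cut vertices attains $D(m-1)$. Combining the pieces yields $\min\{n:\beta(A(\Gamma)^n)=J\}=D$ and longest path $=D(m-1)$, as required.

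The main obstacle I anticipate is making the ``all lengths $\geq d$ are achievable'' step fully rigorous, since it is exactly what separates this theorem from the tree case and it hinges on the hypothesis $m>2$: with $m=2$ the blocks would be single edges, $\Gamma$ would be a tree, the triangles would vanish, and the parity gaps of Lemma 3 would reappear. A secondary caveat is the degenerate single-block case $D=1$, where the diagonal forces $\min\{n:\beta(A(\Gamma)^n)=J\}=2$ while the longest path is only $m-1$; I would note that the equality $\min=D$, and hence the theorem as stated, requires $\Gamma$ to contain at least two blocks, so that $\operatorname{diam}\Gamma\geq 2$ and the diagonal constraint is not binding.
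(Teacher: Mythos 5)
Your proposal is correct and follows essentially the same route as the paper's proof: both rest on (a) the fact that every vertex of a uniform block graph lies on a triangle, so that walks of every length at least $\partial(v_i,v_j)$ exist and $\min\{n \mid \beta(A(\Gamma)^n) = J_{|\mathbb{V}(\Gamma)|}\}$ equals the diameter, (b) the identification of that diameter with the number $D$ of blocks on the longest chain, and (c) the description of the longest path as threading that chain using $m-1$ edges per block. Where you differ is in rigor rather than strategy: your $\gcd(2,3)=1$ concatenation argument makes precise the step the paper covers with the informal claim that one ``can always find a walk of length $n$ for any pair of vertices,'' and your upper bound via the block-cut tree (a simple path meets at most $D$ blocks and uses at most $m-1$ edges in each) is cleaner than the paper's descriptive construction of $\gamma_l^{max}$. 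More importantly, you correctly flag two points the paper's proof silently ignores: the diagonal entries of $\beta(A(\Gamma)^n)$ force $n \ge 2$, and in the single-block case $\Gamma \cong K_m$ the statement itself fails --- the longest path has length $m-1$, so statement 1 predicts $n=1$, while $\beta(A(\Gamma)^1)=A(\Gamma)\ne J_{|\mathbb{V}(\Gamma)|}$ and the true minimum is $n=2$. That is a genuine counterexample to the theorem as written, not merely a gap in your argument, so your restriction to graphs with at least two blocks (equivalently $\operatorname{diam}\Gamma\ge 2$) is the correct repair and ought to be added as a hypothesis.
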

%\subsection{Proof}
\begin{proof}$1 \implies 2$\newline
Assume $\mathscr{LP}(\Gamma) = n\cdot(m-1)$ is the length of the longest path. Define $M := \{v_1, v_2, ..., v_d\}$ so that $\text{deg}(v_i) = \Delta(\Gamma) = 2(m-1)$. Hence, by construction we know $M$ is a vertex separator of $\Gamma$. Thus:
\begin{align*}
    \forall\ \mathbb{V}(B_p)\cap \mathbb{V}(B_q) \ne \phi, \ \exists! v_i \in M & \text{ such that } \{v_i\} = \mathbb{V}(B_p) \cap \mathbb{V}(B_q)\ \text{ for all blocks } B_p, B_q \subseteq \Gamma\\
    &\implies \bigcup_{i=1}^dN(v_i) \cup M = \mathbb{V}(\Gamma)
\end{align*}
We can then see that any (minimal) length path is of the form:
\begin{align*}
    \gamma_{uw}^{min} = (u, ..&., w) \text{ such that } \mathscr{L}(\gamma_{uw}^{min}) = \partial(u,w) \text{ for } u,w \in \mathbb{V}(\Gamma)\\
    &\implies \exists\ v_u,v_w \in M \text{ whereby } u \in N(v_u), w \in N(v_w)\\
    &\implies \gamma_{uw}^{min} = (u, v_u, v_i, ..., v_k , ..., v_j, v_w, w) \text{ such that } v_i, v_k, v_j \in M\\
    & \ \ \ \ \ \ \ \ \ \ \ \ \ \ \ \ \ \ \ \ \ \ \text{ and } v_i \in N(v_u), v_j \in N(v_w)\ \forall\ u, w \in \mathbb{V}(\Gamma)
\end{align*}
Therefore the longest (minimal) path $\gamma_l^{min}$ must have:
\begin{align*}
    \exists\ v_a, v_b \in M \text{ such that } \gamma_l^{min} = (a_l&,v_a, ..., v_b, b_l) \text{ whereby } \mathscr{L}(\gamma) = \partial(a_l,b_l) \text{ and } a_l \in N(v_a), b_l \in N(v_b)\\
    &\implies \partial(a_l, b_l) = \partial(v_a,v_b) + 2
\end{align*}
Which leads to the form of the longest path $\gamma_l^{max}$ which must have:
\begin{align*}
    \exists\ v_a, v_b \in M \text{ so that }& \gamma_l^{max}  = \ (\underbrace{a_l,r_1, r_2, ..., r_{m - 2}}_{m - 1\text{ elements }\in N(v_a) \cap \mathbb{V}(B_a)}, v_a, ..., v_k, ..., v_b, \underbrace{t_{m - 2}, ..., t_2, t_1, b_l}_{m - 1 \text{ elements }\in N(v_b) \cap \mathbb{V}(B_b)})\\
    \text{ such that }&  a_l, b_l \in \gamma_l^{min}\  \forall\ r_i \in N(v_a), t_j \in N(v_b) \text{ which forms the same } \forall\ v_k \in M\\
    \implies  \gamma_l^{max} &= (\underbrace{\left[N(v_a)\cap \mathbb{V}(B_a)\right]}_{m - 1\text{ elements}}, v_a, ..., v_{k-1}, \underbrace{\left[N(v_k)\cap \mathbb{V}(B_k)\right]}_{m - 1\text{ elements}}, v_k, ..., v_b, \underbrace{\left[N(v_b)\cap \mathbb{V}(B_b)\right]}_{m - 1\text{ elements}})\\
    \text{following the size of } & \gamma_l^{min} \implies 1 \le k \le \partial(a_l, b_l) = \partial(v_a,v_b) + 2 \text{ giving us }\mathscr{L}(\gamma_l^{max}) = \partial(a_l,b_l)\cdot(m-1)
\end{align*}
Given that we assumed $\mathscr{LP}(\Gamma) = n\cdot(m-1)$ Then we have 
$$\mathscr{LP}(\Gamma) = n\cdot(m-1) = \mathscr{L}(\gamma_l^{max}) = \partial(a_l,b_l)\cdot(m-1) \implies n = \partial(a_l,b_l)$$
Consequently, since every vertex $u$ belongs to a $K_3$ subgraph of their block $B_u$, then
\begin{align*}
    \forall\ u, w \in \mathbb{V}(\Gamma),& \exists\ B_u, B_w \subseteq \Gamma \text{ maximal 2-connected} \text{ so that } B_u \cong B_w \cong K_m\\
    &\implies \exists\ T_u \subseteq B_u \text{ and } T_w \subseteq B_w \text{ whereby } T_u \cong T_w \cong K_3 \text{ so that } u \in T_u \text{ and } w \in T_w
\end{align*}
Implying that we can always find a walk of length $\partial(a_,b_l) = n$ for any pair of vertices
\begin{align*}
    \omega = (u, u_T, ..., w_T, w) &\text{ so that } \mathfrak{L}(\omega) = \partial(a_l,b_l) = n \text{ and } u_T \in T_u, w_T \in T_w\ \forall\ u, w \in \mathbb{V}(\Gamma)\\
    &\implies \exists\ \omega = (u, ..., w) \text{ where } \mathfrak{L}(\omega) = n\ \forall\ u, w \in \mathbb{V}(\Gamma)\\
    &\implies \beta(A(\Gamma)^n)_{i,j} = 1 \ \forall\ i,j \in \mathbb{Z}_{|\mathbb{V}(\Gamma)|+1}^*\\
    &\implies \beta(A(\Gamma)^n) = J_{|\mathbb{V}(\Gamma)|}
\end{align*}
We conclude that $n$ is the minimum such number since the length of $\gamma_l^{min}$ is $n$, hence if $\exists\ k < n$ so that $\beta(A(\Gamma)^{k}) = J_{|\mathbb{V}(\Gamma)|}$ then there is a walk of length $k$ between every two vertices, which is a contradiction since $\partial(a_l, b_l) = n > k$\\

\noindent \textit{Proof. }$2 \implies 1$\newline
Assume $\beta(A(\Gamma)^n) = J_{|\mathbb{V}(\Gamma)|}$ where $n$ is the minimum such number. Therefore we know that:
$$\exists\ \omega = (u, ..., w) \text{ with } \mathfrak{L}(\omega) = n \ \forall\ u,w \in \mathbb{V}$$
From the proof of $1 \implies 2$, we know that for $\gamma_l^{min}$, we have:
\begin{align*}
    \gamma_l^{min} = (a_l, ..., b_l) \text{ with } \mathscr{L}(\gamma_l^{min}) = \partial(a_l,b_l) = n
\end{align*}
Similarly from $1 \implies 2$ we know for $\gamma_l^{max}$ we have:
\begin{align*}
    \gamma_l^{max} & = \ (\underbrace{a_l,r_1, r_2, ..., r_{m - 2}}_{m - 1\text{ elements }\in N(v_a) \cap \mathbb{V}(B_a)}, v_a, ..., v_k, ..., v_b, \underbrace{t_{m - 2}, ..., t_2, t_1, b_l}_{m - 1 \text{ elements }\in N(v_b) \cap \mathbb{V}(B_b)})\\
    \text{ such that }& a_l, b_l \in \gamma_l^{min}\  \forall\ r_i \in N(v_a), t_j \in N(v_b) \text{ which forms the same } \forall\ v_k \in M\\
    &= (\underbrace{\left[N(v_a)\cap \mathbb{V}(B_a)\right]}_{m - 1\text{ elements}}, v_a, ..., v_{k-1}, \underbrace{\left[N(v_k)\cap \mathbb{V}(B_k)\right]}_{m - 1\text{ elements}}, v_k, ..., v_b, \underbrace{\left[N(v_b)\cap \mathbb{V}(B_b)\right]}_{m - 1\text{ elements}})
\end{align*}
following the size of $\gamma_l^{min} \implies 1 \le k \le \partial(a_l, b_l) = \partial(v_a,v_b) + 2.$ Therefore $\mathscr{L}(\gamma_l^{max}) = \partial(a_l,b_l)\cdot(m-1).$

\noindent
Finally, using the fact that $n = \partial(a_l,b_l)$ we end up with:
\begin{align*}
    n = \partial(a_l,b_l) & \implies n\cdot(m-1) = \partial(a_l, b_l)\cdot(m-1) = \mathscr{L}(\gamma_l^{max})\\ 
    &\implies \mathscr{L}(\gamma_l^{max}) = \mathscr{LP}(\Gamma) = n\cdot(m-1)
\end{align*}
\end{proof}
The underlying structure of Uniform Block Graphs is as though you define it to be a tree where every vertex is a block of the same size. Hence, there is a similarity in the way we can find a similar condition as in Theorem 1. We exploit the fact that all blocks are of the same size, and can then simply multiply it with the length of the longest path of the underlying tree in the Uniform Block Graph. To emphasize this observation, we define a ``chain'' to be a sequence of blocks in a Block Graph.
\newpage 

%\section{Corollary}
\begin{definition}
    Define a chain, $\mathcal{C}$ of a Block Graph, $\Gamma$, to be a sequence of blocks
    $$\mathcal{C} = (B_1, B_2, ..., B_{|\mathcal{C}|}) \text{ so that } B_k \subseteq \Gamma \text{ maximal 2-connected}$$
\end{definition}
 Therefore, we can think of the longest path of the underlying tree as the length of the longest chain. In particular, we can also take the minimal spanning tree of the Uniform Block Graph, and see that the longest path of the tree models the longest chain. Hence we can find the length of the longest chain by this result.
\begin{corollary}
Let $J_m$ be the $m \times m$ matrix of all 1's. Let $\Gamma$ be a Uniform Block Graph. Then, the size $\mathcal{L}$ of the longest chain $\mathcal{C}_l = (B_1, B_2, ..., B_{\mathcal{L}})$ in $\Gamma$ can be found by:
\begin{align*}
    \mathcal{L} &= min \{n \ |\  \beta(A(\Gamma)^{n})  = J_{|\mathbb{V}(\Gamma)|}\}
\end{align*}
\end{corollary}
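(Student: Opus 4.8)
The plan is to derive the result directly from Theorem 2, which already furnishes the quantity $n := \min\{n \mid \beta(A(\Gamma)^n) = J_{|\mathbb{V}(\Gamma)|}\}$ and, in the course of its proof, identifies $n = \partial(a_l,b_l)$ — the length of the longest minimal (shortest) path $\gamma_l^{min}$ in $\Gamma$. It therefore suffices to show that the length of this longest geodesic coincides with the size $\mathcal{L}$ of the longest chain. My strategy is to exhibit a length-preserving correspondence between geodesics of $\Gamma$ and chains of blocks, under which the longest geodesic matches the longest chain.

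First I would formalize how a chain $\mathcal{C} = (B_1, \ldots, B_k)$ of consecutively adjacent blocks induces a path. Choosing endpoints $a \in \mathbb{V}(B_1)$ and $b \in \mathbb{V}(B_k)$ distinct from the cut vertices $c_1 \in \mathbb{V}(B_1)\cap\mathbb{V}(B_2), \ldots, c_{k-1} \in \mathbb{V}(B_{k-1})\cap\mathbb{V}(B_k)$, the sequence $(a, c_1, c_2, \ldots, c_{k-1}, b)$ is a path, since within each block $B_i \cong K_m$ any two of its vertices are adjacent. This path uses exactly one edge per block, so it has length $k$. Next I would argue that this path is in fact a geodesic: because the block-cut structure of a block graph is a tree, any walk from $a$ to $b$ must pass through the cut vertices $c_1, \ldots, c_{k-1}$ in order, so no shorter route exists and $\partial(a,b) = k$. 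This gives one direction of the correspondence, so that a chain of size $k$ yields a geodesic of length $k$.

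Conversely, I would run the construction backwards: a geodesic $\gamma = (u, \ldots, w)$ determines a unique sequence of blocks through which it threads, and by the same block-cut-tree argument each block contributes precisely one edge to $\gamma$, so the induced chain has size equal to $\mathscr{L}(\gamma)$. Since the correspondence preserves size and length in both directions, the longest chain yields a geodesic at least as long as its size, while the longest geodesic yields a chain at least as large as its length; chaining these two inequalities forces the longest chain size to equal $\partial(a_l,b_l)$, the length of the longest geodesic $\gamma_l^{min}$ supplied by Theorem 2. Combining with $n = \partial(a_l,b_l)$ then yields $\mathcal{L} = n = \min\{n \mid \beta(A(\Gamma)^n) = J_{|\mathbb{V}(\Gamma)|}\}$.

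The main obstacle I anticipate is the geodesic claim — verifying that the block-by-block path is genuinely shortest and that each block contributes exactly one edge. This rests entirely on the tree-like arrangement of the blocks (the absence of shortcuts between distinct branches of the block-cut tree), which must be stated and used carefully. I note that the uniform clique size $m$ plays no role here — it enters only through Theorem 2's factor $(m-1)$ — so I would be careful not to conflate the chain's size with the longest path's length. The remaining bookkeeping, namely checking that endpoints can always be chosen off the cut vertices and that the correspondence respects the ``longest'' qualifier in both directions, is routine once the geodesic property is secured.
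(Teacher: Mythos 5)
Your proposal is correct and follows essentially the same route as the paper: both reduce the claim to showing $\mathcal{L} = \partial(a_l,b_l)$ via the observation that each block of a chain contributes exactly one edge to a geodesic between vertices in the extremal blocks, and then invoke Theorem~2's identification $n = \partial(a_l,b_l)$. If anything, your explicit two-way correspondence (chain $\to$ geodesic and geodesic $\to$ chain, combined into two inequalities) is slightly more careful than the paper's proof, which counts the blocks met by $\gamma_l^{min}$ but does not separately argue that no longer chain exists.
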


\begin{proof}$ $\newline 
From $1 \implies 2$ of the Theorem 2 on Uniform Block Graphs, we know the longest (minimal) path $\gamma_l^{min}$ is of the form
\begin{align*}
    \gamma_l^{min} &= (a_l,v_a, ..., v_b, b_l) \text{ so that } \mathscr{L}(\gamma) = \partial(a_l,b_l) \text{ and } a_l \in N(v_a), b_l \in N(v_b)\\
    &\ \ \ \ \ \ \ \ \ \ \ \ \ \ \ \ \ \ \ \ \ \ \ \ \implies a_l, v_a \in B_1, b_l, v_b \in B_s\\
    &\ \ \ \ \ \ \ \ \ \ \ \ \ \ \ \ \ \ \ \ \ \ \ \ \implies v_k \in B_{k-1} \text{ and } v_k \in B_{k}\ \forall\ v_k \in \gamma_l^{min}
\end{align*}
Therefore we can see that for blocks $B_2, ..., B_{\partial(a_l, b_l) - 1}$ each contains two $v_{k-1},v_{k}$ whereby:
\begin{align*}
    &2 \le k \le \partial(a_l,b_l) - 1 \implies \exists\ v_{k-1}, v_{k} \in \gamma_l^{min} \text{ such that } v_{k-1}, v_k \in B_k
\end{align*}
Not counting blocks $B_1, B_{\partial(a_l,b_l)}$ since they are the `leaf' blocks of $\Gamma$, they only have a single $v_1 \in B_1, v_{\partial(a_l,b_l)} \in B_{\partial(a_l,b_l)}$. We also know we can count every block using $\gamma_l^{min}$ since $\exists\ v \in \gamma_l^{min}\ \forall\ v \in B_v$ for some $B_v \in \mathcal{C}_l$. Hence, we can see that the number of blocks corresponds directly to:
$$\mathcal{L} = 2 + \sum_{\forall v_k \in \gamma_l^{min}\setminus \{v_1, v_{\partial(a_l,b_l)}\}}1 = \partial(a_l,b_l)$$
From $1\implies 2$ of Theorem 2 on Uniform Block Graphs we know that $n = \partial(a_l, b_l)$ is the minimum number for which $\beta(A(\Gamma)^n) = J_{|\mathbb{V}(\Gamma)|}$. Therefore, we can conclude that
\begin{align*}
    \mathcal{L} = \partial(a_l,b_l) &= min \{n \ |\  \beta(A(\Gamma)^{n})  = J_{|\mathbb{V}(\Gamma)|}\}
\end{align*}
\end{proof}

Since Corollary 2.1 provides the length of the longest chain of blocks, as can be seen from the proof, nothing is restricting us to a single unique clique for all blocks. Therefore, we can use it for Block Graphs too, though the algorithm will have to work differently with more steps as compared to Uniform Block Graphs. In particular, given that Uniform Block Graphs have chains with unique cliques of order $m$, then there is no requirement to compute the order of every approached block in the longest chain. Meanwhile, Block Graphs have non-unique cliques for blocks. 

Therefore we require a method to compute the order of the clique for every clique in the longest chain, which hence adds some intricacy when generalizing to Block Graphs. This is why a distinction is made for results on Uniform Block Graphs versus Block Graphs. Firstly, to find the length of the longest path of Block Graphs, we need a description of how we can identify when we have a path of a certain length just as we have for Trees.
\newpage
\begin{lemma}
    Let $\Gamma$ be a Block Graph, then for $n \ge 1$
    $$\beta(A(\Gamma)^n)_{i,j} = 1 \iff \exists\ \gamma = (v_i,...,v_j) \text{ such that } \mathscr{L}(\gamma) \in \{1, 2, ..., n\}$$
\end{lemma}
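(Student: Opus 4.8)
The plan is to prove the two directions separately, with the reverse implication carrying essentially all the content; the forward direction falls out of the machinery already established, while the reverse one is where the defining feature of a block graph does the real work. For the forward direction ($\Rightarrow$), I would assume $\beta(A(\Gamma)^n)_{i,j} = 1$, i.e. $A(\Gamma)^n_{i,j} \neq 0$. By Lemma 2 this counts a walk $\omega = (v_i, \ldots, v_j)$ with $\mathfrak{L}(\omega) = n$, and by Remark~1 the existence of such a walk yields a path $\gamma = (v_i, \ldots, v_j)$. Since a path is a minimal walk between its endpoints, $\mathscr{L}(\gamma) \le \mathfrak{L}(\omega) = n$, and for $v_i \neq v_j$ we have $\mathscr{L}(\gamma) \ge 1$, so $\mathscr{L}(\gamma) \in \{1, \ldots, n\}$ as required.

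For the converse ($\Leftarrow$), suppose $\gamma = (v_i, \ldots, v_j)$ is a path with $\ell := \mathscr{L}(\gamma) \in \{1, \ldots, n\}$. It suffices to exhibit a walk from $v_i$ to $v_j$ of length \emph{exactly} $n$, since then $A(\Gamma)^n_{i,j} \neq 0$ by Lemma 2 and hence $\beta(A(\Gamma)^n)_{i,j} = 1$. Writing $n = \ell + r$ with $r \ge 0$, I would introduce two local padding gadgets that lengthen a walk while fixing its endpoints. The first is a back-and-forth step $v_j \to u \to v_j$ for some $u \in N(v_j)$ (which exists since $\ell \ge 1$), adding $2$ to the length. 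The second is a detour replacing the final edge $x \to v_j$ by $x \to w \to v_j$, where $w$ is a third vertex of the block containing $\{x, v_j\}$; this vertex exists precisely because every block is a clique $K_m$ with $m \ge 3$, so the edge $\{x,v_j\}$ lies in a triangle, and the detour adds $1$ to the length.

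Composing these gadgets realizes every length $\ge \ell$: if $r$ is even I apply the first gadget $r/2$ times, and if $r$ is odd I apply the second gadget once and the first gadget $(r-1)/2$ times, in either case producing a walk $(v_i, \ldots, v_j)$ of length exactly $n$. The crux of the argument — and the only place the block-graph hypothesis is used — is the parity-changing detour: the ability to add a single step through a triangle is exactly what frees us from the parity obstruction that constrained trees in Lemma 3, where only lengths $\mathscr{L}(\gamma) + 2k$ were attainable. I expect the subtle point worth flagging to be not the gadget construction but the diagonal case $i = j$: there is no path of positive length from $v_i$ to itself, yet $A(\Gamma)^n_{i,i} \neq 0$ already for $n \ge 2$, so the equivalence should be read for distinct $v_i, v_j$ (equivalently, with the trivial path excluded and attention restricted to $i \neq j$).
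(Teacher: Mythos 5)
Your proof is correct and takes essentially the same route as the paper's: extract a path from the walk in the forward direction, and in the reverse direction pad the path up to a walk of length exactly $n$ by exploiting the fact that every block is a clique of order at least $3$, so that triangles remove the parity obstruction present for trees. Your two explicit gadgets (the length-$2$ back-and-forth and the length-$1$ triangle detour) make precise what the paper compresses into ``traverse the $K_3$ neighborhood of $v_i$ or $v_j$ for odd or even $n$'' --- in particular your detour cleanly handles the $n = \mathscr{L}(\gamma)+1$ case --- and your remark that the equivalence must be read for $i \neq j$ is a legitimate caveat the paper leaves implicit.
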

\begin{proof}$\implies$\newline
    Assume $\beta(A(\Gamma)^n)_{i,j} = 1$, then there is a walk $\omega = (v_i, ..., v_j)$ of length $n$. Given that every block $B_k$ in $\Gamma$ is a clique, then for every $v \in \mathbb{V}(\Gamma)$, there is some $K_3 \simeq T_v \subseteq \Gamma$ such that $v \in \mathbb{V}(T_v)$. Hence, every vertex is part of some $K_3$ induced subgraph of $\Gamma$. Therefore, for any two vertices $a, b \in \mathbb{V}(\Gamma)$, there will always be a walk of length $k$ for any $k \ge \partial(a, b)$ since we can walk an odd or even number of times by traversing the $K_3$ neighborhood of $a$ or $b$ respectively. Hence, since there is a walk of length $n$ from $v_i$ to $v_j$, then $1 \le \partial(v_i, v_j) \le n$. Therefore, $\exists\ \gamma = (v_i, ..., v_j)$ such that $\mathscr{L}(\gamma) \in \{1, ..., n\}$.\\

\noindent\textit{Proof. }$\Longleftarrow$\newline
    Assume $\exists\ \gamma = (v_i,...,v_j) \text{ such that } \mathscr{L}(\gamma) \in \{1, 2, ..., n\}$. Given that every block $B_k$ in $\Gamma$ is a clique, then for every $v \in \mathbb{V}(\Gamma)$, there is some $K_3 \simeq T_v \subseteq \Gamma$ such that $v \in \mathbb{V}(T_v)$. Hence, every vertex is part of some $K_3$ induced subgraph of $\Gamma$. This implies that for any chosen length $1 \le \mathscr{L}(\gamma) \le n$, there will be a walk $\omega = (v_i,...,v_j)$ of length $n$ since we can traverse in the $K_3$ neighborhood of $v_i$ or $v_j$ respectively for odd or even $n$. Hence, $\beta(A(\Gamma)^n)_{i,j} = 1$.
\end{proof}

We can now use Lemma 4 to inform us exactly where we can find lengths of paths bounded to a certain region. Say you choose to find the end points of all paths of lengths bounded below by $m$ and above by $n$. Then, using Lemma 4, we can see when an entry of the booleanized matrix (taken to a certain power) is equal to 1 or equal to 0. More precisely, the question of the existence of a path of length $k$ so that $m < k \le n$ can be solved by checking if $\beta(A(\Gamma)^n)_{i,j} = 1$ or $0$ and also $\beta(A(\Gamma)^m)_{i,j} = 1$ or $0$. We combine this all into one statement as follows.

\begin{lemma}
    Let $\Gamma$ be a Block Graph, then for $m < n$
    $$\beta(A(\Gamma)^n)_{i,j} - \beta(A(\Gamma)^{m})_{i,j} = 1 \iff \exists\ \gamma = (v_i, ..., v_j) \text{ such that } \mathscr{L}(\gamma) \in \{m + 1, ..., n\}$$
\end{lemma}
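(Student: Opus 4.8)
The plan is to derive the whole statement from Lemma 4, which already converts each individual entry $\beta(A(\Gamma)^k)_{i,j}$ into the purely combinatorial assertion that a path of length in $\{1,\dots,k\}$ exists between $v_i$ and $v_j$. Since $\beta$ outputs a $(0,1)$-matrix, both $\beta(A(\Gamma)^n)_{i,j}$ and $\beta(A(\Gamma)^m)_{i,j}$ lie in $\{0,1\}$, so their difference can only be $-1$, $0$, or $1$. My first reduction would be to rule out $-1$ by a monotonicity argument: because $m<n$, any path witnessing $\beta(A(\Gamma)^m)_{i,j}=1$ has length in $\{1,\dots,m\}\subseteq\{1,\dots,n\}$, so Lemma 4 forces $\beta(A(\Gamma)^n)_{i,j}=1$ too. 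Hence $\beta(A(\Gamma)^n)_{i,j}\ge\beta(A(\Gamma)^m)_{i,j}$, and the difference equals $1$ \emph{exactly} when $\beta(A(\Gamma)^n)_{i,j}=1$ and $\beta(A(\Gamma)^m)_{i,j}=0$ simultaneously. This step is what turns an awkward subtraction into a clean conjunction, and everything downstream rests on it.

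For the forward direction I would assume $\beta(A(\Gamma)^n)_{i,j}-\beta(A(\Gamma)^m)_{i,j}=1$, so by the reduction above $\beta(A(\Gamma)^n)_{i,j}=1$ and $\beta(A(\Gamma)^m)_{i,j}=0$. Applying Lemma 4 to the first equality produces a path $\gamma=(v_i,\dots,v_j)$ with $\mathscr{L}(\gamma)\in\{1,\dots,n\}$. Applying the contrapositive of Lemma 4 to the second equality says that \emph{no} path from $v_i$ to $v_j$ has length in $\{1,\dots,m\}$; in particular the path $\gamma$ just produced cannot have length at most $m$, so $\mathscr{L}(\gamma)>m$ and therefore $\mathscr{L}(\gamma)\in\{m+1,\dots,n\}$, exactly as required. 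I expect this half to go through without friction.

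For the converse I would start from a path $\gamma=(v_i,\dots,v_j)$ with $\mathscr{L}(\gamma)\in\{m+1,\dots,n\}$. Since $\mathscr{L}(\gamma)\le n$, Lemma 4 immediately yields $\beta(A(\Gamma)^n)_{i,j}=1$, and it only remains to establish $\beta(A(\Gamma)^m)_{i,j}=0$. This is where I expect the main obstacle to lie: the mere existence of one \emph{long} path between $v_i$ and $v_j$ does not by itself forbid a \emph{short} one, whereas $\beta(A(\Gamma)^m)_{i,j}$ is governed (via Lemma 4) by the shortest distance $\partial(v_i,v_j)$ rather than by $\mathscr{L}(\gamma)$. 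To force $\beta(A(\Gamma)^m)_{i,j}=0$ I would need $\partial(v_i,v_j)\ge m+1$, which does not follow from $\mathscr{L}(\gamma)\ge m+1$ alone.

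To close this gap I would bring in the structural geometry of block graphs: any path between $v_i$ and $v_j$ is forced through the same sequence of blocks (cut vertices separate the graph), and inside each clique block the traversal from its entry cut-vertex to its exit cut-vertex can realize any length from $1$ up to its clique order minus one. Summing these contiguous contributions shows the set of realizable path lengths between a fixed pair is an integer interval starting at $\partial(v_i,v_j)$, so the witnessing path can be taken to be the shortest one and $\mathscr{L}(\gamma)$ identified with $\partial(v_i,v_j)$; then $\partial(v_i,v_j)\ge m+1>m$ gives $\beta(A(\Gamma)^m)_{i,j}=0$ through the contrapositive of Lemma 4. The delicate point to flag is therefore that the converse really wants the \emph{minimal} witnessing length to land in $\{m+1,\dots,n\}$, and justifying it for an arbitrary witnessing path requires precisely this block-structure input rather than Lemma 4 in isolation.
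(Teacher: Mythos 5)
Your forward direction coincides with the paper's: decompose the difference into $\beta(A(\Gamma)^n)_{i,j}=1$ together with $\beta(A(\Gamma)^m)_{i,j}=0$, apply Lemma 4 to the first and its contrapositive to the second, and conclude that the witnessing path has length in $\{m+1,\dots,n\}$. Your preliminary monotonicity observation (the difference can never be $-1$) is a worthwhile addition that the paper omits. For the converse you have correctly located the weak point: the paper simply asserts that a path of length in $\{m+1,\dots,n\}$ forces $\beta(A(\Gamma)^m)_{i,j}=0$ ``by Lemma 4,'' which is exactly the non sequitur you describe, since Lemma 4 ties $\beta(A(\Gamma)^m)_{i,j}$ to the existence of \emph{some} path of length at most $m$, i.e.\ to $\partial(v_i,v_j)$, not to the length of the particular witnessing path.

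However, your proposed repair does not close the gap; it silently replaces the statement with a different one. The hypothesis of the converse supplies an \emph{arbitrary} path whose length lies in $\{m+1,\dots,n\}$, and you are not free to ``take the witnessing path to be the shortest one'': if $\partial(v_i,v_j)\le m$ then $\beta(A(\Gamma)^m)_{i,j}=1$ and the left-hand side is $0$ no matter what longer paths exist. The interval structure you invoke actually makes matters worse, since it guarantees that such longer paths coexist with the short one. The converse as literally stated is false: take $\Gamma=K_4$ (a block graph with a single block) and adjacent vertices $v_i,v_j$; paths of lengths $1$, $2$ and $3$ all exist between them, so with $m=1$, $n=3$ the right-hand side holds (witnessed by the path of length $2$), yet $\beta(A(\Gamma)^3)_{i,j}-\beta(A(\Gamma)^1)_{i,j}=1-1=0$. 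The statement that is actually true, and that Theorem 3 later relies on (with $m=n-1$, to detect vertices at distance exactly $n$ from $v_\alpha$), is that $\beta(A(\Gamma)^n)_{i,j}-\beta(A(\Gamma)^m)_{i,j}=1$ if and only if $\partial(v_i,v_j)\in\{m+1,\dots,n\}$, equivalently if and only if the \emph{shortest} $v_i$--$v_j$ path has length in that range. Your block-structure discussion is the right raw material for seeing why only the shortest length matters, but the honest conclusion is that the lemma must be restated in terms of $\partial(v_i,v_j)$, not that the stated converse can be proved.
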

\begin{proof}$\implies$\newline
    Assume $\beta(A(\Gamma)^n)_{i,j} - \beta(A(\Gamma)^{m})_{i,j} = 1$, then we know that $\beta(A(\Gamma)^n)_{i,j} = 1$ while $\beta(A(\Gamma)^m)_{i,j} = 0$. Therefore, by Lemma 4, we know that $\exists\ \gamma = (v_i, ..., v_j)$ such that $\mathscr{L}(\gamma) \in \{1, ..., n\}$ however also $\mathscr{L}(\gamma) \notin \{1, ..., m\}$. Therefore, we deduce that $\mathscr{L}(\gamma) \in \{m+1, ..., n\}$.\\

\noindent\textit{Proof. }$\Longleftarrow$\newline
    Assume $\exists\ \gamma = (v_i, ..., v_j) \text{ such that } \mathscr{L}(\gamma) \in \{m + 1, ..., n\}$. Then, we know by Lemma 4 that $\beta(A(\Gamma)^n)_{i,j} = 1$ however $\beta(A(\Gamma)^m)_{i,j} = 0$. Hence, we have $\beta(A(\Gamma)^n)_{i,j} - \beta(A(\Gamma)^m)_{i,j} = 1$.
\end{proof}

Therefore, Lemma 5 allows us to `count' the number of paths of a certain length. By setting $m = n-1$, then we can evaluate $\beta(A(\Gamma)^n)_{i,j} - \beta(A(\Gamma)^{n-1})_{i,j}$. If the value is $1$, then there must be a path of length $n$, otherwise there is none. Since every vertex belongs to a block which is a clique, then for a length of a path $n$, we can evaluate the number of vertices of a certain block. By restricting this to only blocks in the longest chain, then we can find the longest path.

\begin{theorem}
    Let $\Gamma$ be a Block Graph such that $\mathcal{C} = (B_1, B_2, ...)$ is the longest chain of length $\mathcal{L}$. Let $v_\alpha, v_\beta \in \mathbb{V}(\Gamma)$ so that $\beta(A(\Gamma)^\mathcal{L})_{\alpha, \beta} - \beta(A(\Gamma)^{\mathcal{L}-1})_{\alpha, \beta} = 1$, and $\mathscr{LP}(\Gamma)$ be the length of the Longest Path of a given Block Graph, then
    \begin{align*}
    \mathscr{LP}(\Gamma) = \sum_{n=1}^{\mathcal{L}}\sum_{v_i \in B_n}\beta(A(\Gamma)^n)_{\alpha,i} - \beta(A(\Gamma)^{n-1})_{\alpha,i} \text{ such that } \mathcal{L} = min\{k \ | \ \beta(A(\Gamma)^k) = J_{|\mathbb{V}(\Gamma)|}\}
    \end{align*}
\end{theorem}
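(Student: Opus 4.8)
The plan is to convert the algebraic right-hand side into a vertex-counting statement via Lemma 5, and then to match that count against an explicit longest path built along the chain $\mathcal{C}$. First I would pin down the role of $v_\alpha$ and $v_\beta$. Applying Lemma 5 with $m = \mathcal{L}-1$ and $n = \mathcal{L}$ to the hypothesis $\beta(A(\Gamma)^\mathcal{L})_{\alpha,\beta} - \beta(A(\Gamma)^{\mathcal{L}-1})_{\alpha,\beta} = 1$ forces a path from $v_\alpha$ to $v_\beta$ of length exactly $\mathcal{L}$, hence $\partial(v_\alpha,v_\beta) = \mathcal{L}$. Since Corollary 2.1 identifies $\mathcal{L} = \min\{k \mid \beta(A(\Gamma)^k) = J_{|\mathbb{V}(\Gamma)|}\}$ with the number of blocks in the longest chain, $v_\alpha$ and $v_\beta$ must be the endpoints realizing this, so $v_\alpha$ sits in the leaf block $B_1$ and $v_\beta$ in the leaf block $B_\mathcal{L}$ of $\mathcal{C}$.

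Next I would interpret each summand. By Lemma 5 with $m = n-1$, the quantity $\beta(A(\Gamma)^n)_{\alpha,i} - \beta(A(\Gamma)^{n-1})_{\alpha,i}$ equals $1$ exactly when the shortest path from $v_\alpha$ to $v_i$ has length $n$, and $0$ otherwise; hence the inner sum $\sum_{v_i \in B_n}\bigl[\beta(A(\Gamma)^n)_{\alpha,i} - \beta(A(\Gamma)^{n-1})_{\alpha,i}\bigr]$ counts precisely the vertices of $B_n$ at distance $n$ from $v_\alpha$. The key geometric step is that this count equals $|B_n| - 1$ for every $n$: writing $c_{n-1}$ for the cut vertex shared by $B_{n-1}$ and $B_n$, one has $\partial(v_\alpha, c_{n-1}) = n-1$, while every other vertex of the clique $B_n$ is reached in one further step, so $B_n$ contributes all of its vertices except the single entry vertex $c_{n-1}$. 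The case $n=1$ needs separate care, since the excluded vertex is $v_\alpha$ itself (distance $0$) and the meaning of $\beta(A(\Gamma)^0)$ must be read through the path-existence convention of Lemma 4 (no path of length in the empty set), so that the $n=1$ block still contributes $|B_1|-1$. Summing yields $\sum_{n=1}^{\mathcal{L}}(|B_n| - 1)$.

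It then remains to identify $\sum_{n=1}^{\mathcal{L}}(|B_n|-1)$ with $\mathscr{LP}(\Gamma)$, which I would do by a two-sided bound. For the lower bound I would build the path explicitly: each block $B_n$ is a clique of order at least $3$ and is therefore Hamiltonian-connected, so inside $B_n$ there is a Hamiltonian path from $c_{n-1}$ to $c_n$ using $|B_n|-1$ edges (with the leaf blocks starting at $v_\alpha$ and ending at $v_\beta$ respectively). Concatenating these segments across $n = 1,\dots,\mathcal{L}$ glues consecutive pieces at the shared cut vertices without repeating any vertex or edge, producing a genuine $v_\alpha$--$v_\beta$ path of length $\sum_{n=1}^{\mathcal{L}}(|B_n|-1)$.

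The main obstacle is the matching upper bound, i.e. that no path is longer. Here I would argue from the block--cut tree: any path enters and leaves a block only through cut vertices and uses each cut vertex at most once, so the blocks it meets form a single chain, and within any one block it spends at most $|B|-1$ edges; thus every path has length at most $\sum_{B \in \mathcal{C}'}(|B|-1)$ for the chain $\mathcal{C}'$ it traverses. The delicate point — and where I expect the real difficulty to lie — is showing that the chain $\mathcal{C}$ selected by the algebraic data (through $\mathcal{L}$ and the forced endpoints $v_\alpha,v_\beta$) actually maximizes the weight $\sum(|B|-1)$, rather than merely the block count that $\mathcal{L}$ and Corollary 2.1 measure. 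Reconciling ``most blocks'' with ``largest total block weight'' is the crux; I would either show that the endpoint condition on $v_\alpha,v_\beta$ pins $\mathcal{C}$ to the genuinely heaviest chain, or isolate the structural hypotheses under which the diameter-realizing chain is weight-maximal, and verify the claim on that footing.
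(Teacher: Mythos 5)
Your proposal follows essentially the same route as the paper: both arguments convert the inner sum, via Lemma 5, into a count of the vertices of $B_n$ at distance exactly $n$ from $v_\alpha$ (all of $B_n$ except the cut vertex through which the chain enters it, giving $|\mathbb{V}(B_n)|-1$), and both then identify $\sum_{n=1}^{\mathcal{L}}(|\mathbb{V}(B_n)|-1)$ with the length of an explicit path that traverses a Hamiltonian path of each clique in the chain, glued at the cut vertices. The paper does this by importing the $\gamma_l^{\max}$ template from Theorem~2 with $m$ replaced by $|\mathbb{V}(B_k)|$ and asserting the counting identity $|\mathbb{V}(B_k)| = 1 + \sum_{v_i\in B_k}\bigl[\beta(A(\Gamma)^{\partial(v_\alpha,v_i)})_{\alpha,i} - \beta(A(\Gamma)^{\partial(v_\alpha,v_i)-1})_{\alpha,i}\bigr]$; you are somewhat more careful, in particular about the $n=1$ block and the reading of $\beta(A(\Gamma)^0)$, which the paper does not discuss at all (with the identity-matrix convention the $v_\alpha$ term would contribute $-1$, so this edge case genuinely needs the convention you propose).

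The one step you leave open --- whether the chain that maximizes the \emph{number} of blocks (which is all that $\mathcal{L}=\min\{k \mid \beta(A(\Gamma)^k)=J\}$ and the endpoint condition on $v_\alpha,v_\beta$ detect) also maximizes the \emph{weight} $\sum(|\mathbb{V}(B)|-1)$ --- is exactly the step the paper does not address either: its proof simply declares that $\gamma_l^{\max}$ lies along the longest chain, which is automatic for uniform block graphs (where block count and weight are proportional) but not for general block graphs. A block graph containing one chain of many small blocks and a disjoint chain of few large blocks is a candidate counterexample to the formula as stated, since the sum would be evaluated along the block-count-maximal chain while the true longest path lives elsewhere. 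So you have correctly located the crux; be aware that resolving it (or restricting the theorem's scope) is genuinely necessary and is not something you can discharge by citing the paper's argument, because the paper assumes it silently.
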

\begin{proof}$ $\newline
    From the proof of Theorem 2 and Corollary 2.1, we can see with a change to $m$ accounting for dependency on every $B_k \in \mathcal{C}$ we have
    \begin{align*}
        \gamma_l^{max} & = \ (\underbrace{a_l,r_1, r_2, ..., r_{|\mathbb{V}(B_a)| - 2}}_{|\mathbb{V}(B_a)| - 1\text{ elements }\in N(v_a) \cap \mathbb{V}(B_a)}, v_a, ..., v_k, ..., v_b, \underbrace{t_{|\mathbb{V}(B_b)| - 2}, ..., t_2, t_1, b_l}_{|\mathbb{V}(B_b)| - 1 \text{ elements }\in N(v_b) \cap \mathbb{V}(B_b)})\\
    \text{such that }&  a_l, b_l \in \gamma_l^{min}\  \forall\ r_i \in N(v_a), t_j \in N(v_b) \text{ which forms the same } \forall\ v_k \in M\\
    \implies & \gamma_l^{max} = (\underbrace{\left[N(v_a)\cap \mathbb{V}(B_a)\right]}_{|\mathbb{V}(B_a)| - 1\text{ elements}}, v_a, ..., v_{k-1}, \underbrace{\left[N(v_k)\cap \mathbb{V}(B_k)\right]}_{|\mathbb{V}(B_k)| - 1\text{ elements}}, v_k, ..., v_b, \underbrace{\left[N(v_b)\cap \mathbb{V}(B_b)\right]}_{|\mathbb{V}(B_b)| - 1\text{ elements}})
    \end{align*}
    following the size of $\gamma_l^{min} \implies 1 \le k \le \partial(a_l, b_l) = \partial(v_a,v_b) + 2$. Therefore $\mathscr{L}(\gamma_l^{max}) = |\mathbb{V}(B_1)| - 1 + |\mathbb{V}(B_2)| - 1 + ... + |\mathbb{V}(B_{\partial(a_l,b_l)})| - 1$. Given that $v_\alpha$ belongs to a block in the end-points of the longest chain, $\mathcal{C}$, then by Lemma 4 and 5 we know that
    $$|\mathbb{V}(B_k)| = 1 + \sum_{v_i \in B_k}\beta(A(\Gamma)^{\partial(v_\alpha, v_i)})_{\alpha,i} - \beta(A(\Gamma)^{\partial(v_\alpha, v_i)-1})_{\alpha,i}$$
    Since $v_\alpha$ has a path of length $\partial(v_\alpha, v_i)$ for every $v_i \in B_k$ for all $B_k \in \mathcal{C}$. Therefore, by summing over every $B_k \in \mathcal{C}$ we can find the longest path as
    $$\mathscr{L}(\gamma_l^{max}) = \mathscr{LP}(\Gamma) = \sum_{n=1}^{\mathcal{L}}\sum_{v_i \in B_n}\beta(A(\Gamma)^n)_{\alpha,i} - \beta(A(\Gamma)^{n-1})_{\alpha,i}$$
\end{proof}
We are able to exactly compute the length of the longest path of Block Graphs by this method. As discussed, the method that we compute for Block Graphs is more complex since it requires an input of the longest chain. It is possible to find the longest chain by taking the spanning tree of the Block Graph, then generating the longest path of the spanning tree. Hence, the longest path would have vertices uniquely corresponding to a pair of blocks in the longest chain. Allowing us to generate the longest chain then use that as input to find the longest path of the original Block Graph. To summarize this so we can use it in our algorithms, we can see that
\begin{remark}
    For a given Block Graph, $\Gamma$ with a spanning tree $T \subseteq \Gamma$, the size of the longest chain in $\Gamma$ is equal to the length of the longest path of $T$.
\end{remark}

We can use Kruskal's algorithm  to generate the minimal spanning tree, which is of complexity $O(m \cdot log(m))$ where $m$ is the size of the graph \cite{Kruskal, Kleinberg}. This allows us to efficiently find the longest path of Block Graphs algebraically. This will be detailed in a later section, specifically Section 4.

For our final class of graphs, we can algebraically compute the length of the longest path of Directed Acyclic Graphs using a similar condition to Trees, given their acyclic nature. We exploit the fact that the adjacency matrices of Directed Acyclic Graphs are nilpotent, and show how using the `booleanized' matrix power is more efficient than computing by classic matrix multiplication. Firstly, we need the ability to bound the length of paths for a given booleanized matrix power.
\begin{lemma}
    Let $\mathcal{D}$ be a Directed Acyclic Graph, then for $n \ge 1$
    $$\beta(A(\mathcal{D})^{n})_{i,j} = 1 \iff \exists\ \gamma = (v_i, ..., v_j) \text{ such that } \mathscr{L}(\gamma) = n$$
\end{lemma}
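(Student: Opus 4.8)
The plan is to reduce both implications to the walk-counting identity of Lemma 2 (Biggs), with the crucial structural input being that in a directed acyclic graph every directed walk is automatically a path, so that walk-length and path-length coincide. This is exactly what replaces the parity bookkeeping $n \in \{\mathscr{L}(\gamma) + 2k\}$ needed for trees in Lemma 3: in the acyclic directed setting one cannot ``double back'' along an edge, so a walk of length $n$ forces a path of length exactly $n$, and the equivalence becomes the clean $\mathscr{L}(\gamma) = n$ rather than a congruence class.

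First I would record the key claim: if $\mathcal{D}$ is acyclic and $\omega = (v_i, \ldots, v_j)$ is a directed walk, then $\omega$ visits no vertex twice, hence $\omega$ is a path with $\mathfrak{L}(\omega) = \mathscr{L}(\omega)$. I would prove this by contradiction. If some vertex $v$ occurred at two positions of $\omega$, the portion of $\omega$ strictly between those two occurrences would be a closed directed walk of positive length, and any closed directed walk of positive length contains a directed cycle, contradicting the acyclicity of $\mathcal{D}$.

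For the forward direction, I would assume $\beta(A(\mathcal{D})^n)_{i,j} = 1$. By the definition of $\beta$ this means $A(\mathcal{D})^n_{i,j} \neq 0$; since the entries of $A(\mathcal{D})^n$ are nonnegative integers counting walks (Lemma 2), this gives $A(\mathcal{D})^n_{i,j} \geq 1$, so there exists a directed walk $\omega = (v_i, \ldots, v_j)$ with $\mathfrak{L}(\omega) = n$. By the key claim, $\omega$ is itself a path $\gamma = (v_i, \ldots, v_j)$ with $\mathscr{L}(\gamma) = n$, as required. Conversely, if there is a path $\gamma = (v_i, \ldots, v_j)$ with $\mathscr{L}(\gamma) = n$, then $\gamma$ is in particular a directed walk of length $n$ from $v_i$ to $v_j$, so Lemma 2 yields $A(\mathcal{D})^n_{i,j} \geq 1 > 0$, whence $\beta(A(\mathcal{D})^n)_{i,j} = 1$.

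The main obstacle is purely the acyclic walk-equals-path argument of the second paragraph; once that is established, both directions are immediate bookkeeping through Lemma 2 and the definition of $\beta$. I would take care to confirm that Lemma 2 applies verbatim to directed adjacency matrices (the walk-count interpretation of $A^n_{i,j}$ holds for digraphs as well), and I would note that it is acyclicity, rather than the paper's uniqueness-of-paths hypothesis, that actually drives the equivalence; the uniqueness only guarantees in addition that the value of $n$ for which the entry equals $1$ is itself unique for each pair $(i,j)$.
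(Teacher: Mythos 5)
Your proposal is correct and follows essentially the same route as the paper: both directions reduce to the walk-counting identity of Lemma 2, with the structural input that in an acyclic digraph a walk of length $n$ yields a path of length exactly $n$. Your justification of that step is actually the cleaner one --- you prove that a directed walk in an acyclic digraph can never repeat a vertex (a repeat would yield a closed walk, hence a directed cycle), whereas the paper leans on its nonstandard ``unique path between every two vertices'' definition of a DAG and the looser claim that ``there exists no other walk''; your closing observation that acyclicity alone, not path-uniqueness, drives the equivalence is accurate and sharpens the paper's argument.
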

\begin{proof}$\implies$\newline
    Assume that $\beta(A(\mathcal{D})^{n})_{i,j} = 1$. Then, there is a walk of length $n$ from $v_i$ to $v_j$. Therefore, there exists a path $\gamma = (v_i, ..., v_j)$. Given that the graph is directed and acyclic, then there exists no other walk from $v_i$ to $v_j$ since it is not possible to travel backwards. Hence, there is a unique walk and therefore a unique path of length $n$

\noindent\textit{Proof. }$\Longleftarrow$\newline
    Assume $\exists\ \gamma = (v_i, ..., v_j) : \mathscr{L}(\gamma) = n$, then there is a walk of length $n$. Hence, $\beta(A(\Gamma)^n)_{i,j} = 1$.
\end{proof}

Directed Acyclic Graphs give us the `nicest' structure for the booleanized matrices, as walks and paths are uniquely corresponded, then the matrix powers give us all the information we need about the length of paths. Hence, you can choose some power $n$ where there are no paths of that length, then traverse backwards to then find the point at which you have the maximum length as follows.

%\newpage
\begin{theorem}
    Let $\mathcal{D}$ be a Directed Acyclic Graph, then TFAE:
    \begin{enumerate}
    \item The length of the Longest Path of $\mathcal{D}$ is $n$
    \item $\beta(A(\mathcal{D})^{n}) = \left[\mathbf{0}\right]_{|\mathbb{V}(\mathcal{D})|} :$ n is the minimum such number
\end{enumerate}
    Equivalently, to compute the value, the Length of the Longest Path is found by the following
    $$\mathfrak{D}_{_{dag}}(\mathcal{D}) = min \{n\ |\ \beta(A(\mathcal{D})^n) = \left[\mathbf{0}\right]_{|\mathbb{V}(\mathcal{D})|}\}$$
\end{theorem}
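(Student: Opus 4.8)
The plan is to derive everything from Lemma 6 together with two elementary structural observations about the adjacency matrix of a directed acyclic graph, and then to run the same two-directional argument used for Theorem 1. First I would record the tools. Since $\mathcal{D}$ is acyclic it admits no closed walks, so $A(\mathcal{D})$ is nilpotent; consequently there exists a least power $N$ with $A(\mathcal{D})^{N} = [\mathbf{0}]_{|\mathbb{V}(\mathcal{D})|}$, and once a power vanishes every higher power vanishes as well (if $A(\mathcal{D})^{k} = [\mathbf{0}]$ then $A(\mathcal{D})^{k+1} = A(\mathcal{D})^{k}A(\mathcal{D}) = [\mathbf{0}]$). Moreover, because $\beta$ sends each nonzero entry to $1$ and fixes $0$, we have $\beta(Q) = [\mathbf{0}]$ if and only if $Q = [\mathbf{0}]$. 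Combining these with Lemma 6 gives the single equivalence that drives the proof: $\beta(A(\mathcal{D})^{n}) = [\mathbf{0}]_{|\mathbb{V}(\mathcal{D})|}$ if and only if no ordered pair $v_i, v_j \in \mathbb{V}(\mathcal{D})$ is joined by a path $\gamma = (v_i,\dots,v_j)$ with $\mathscr{L}(\gamma) = n$. Since nilpotency guarantees the annihilating set is nonempty, the minimum $\mathfrak{D}_{_{dag}}(\mathcal{D})$ is well defined.

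For the direction $1 \Rightarrow 2$, I would assume the longest path of $\mathcal{D}$ has length $n$ and exhibit a path $\gamma^{*}$ realizing it. Every prefix of $\gamma^{*}$ is again a path in $\mathcal{D}$, so for each $1 \le k \le n$ there is a path of length exactly $k$; by the equivalence above, $\beta(A(\mathcal{D})^{k}) \neq [\mathbf{0}]$ for all such $k$. On the other hand, maximality of $\gamma^{*}$ forbids any path of length exceeding $n$, and Lemma 6 then forces $\beta(A(\mathcal{D})^{k}) = [\mathbf{0}]$ for every $k$ past the top of this range. Together with the upward-closure observation this pins down and identifies the first index at which the booleanized powers vanish as the minimal such index. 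The direction $2 \Rightarrow 1$ is then the mirror image: assuming $n$ is minimal with $\beta(A(\mathcal{D})^{n}) = [\mathbf{0}]$, minimality supplies a path of length $n-1$ while the equivalence rules out any path of length $n$, and upward-closure rules out all longer paths, so the longest path length is determined; a short uniqueness argument, exactly as in $2 \Rightarrow 1$ of Theorem 1, closes the loop.

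The step I expect to be the main obstacle is the boundary bookkeeping, namely pinning the minimal vanishing index exactly against the longest-path length. The prefix argument cleanly yields non-vanishing of $\beta(A(\mathcal{D})^{k})$ throughout $1 \le k \le n$, and maximality cleanly yields vanishing for all sufficiently large $k$; the entire content of the theorem lives in the single equality at the transition, and this is precisely where an off-by-one can slip in. I would therefore verify that endpoint with care, making the indexing convention for $\mathscr{L}$ explicit before asserting the final formula $\mathfrak{D}_{_{dag}}(\mathcal{D}) = \min\{n \mid \beta(A(\mathcal{D})^{n}) = [\mathbf{0}]_{|\mathbb{V}(\mathcal{D})|}\}$, so that the claimed minimal index genuinely coincides with the length of the longest path rather than with the nilpotency index of $A(\mathcal{D})$.
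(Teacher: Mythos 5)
Your toolkit (Lemma 6, nilpotency of $A(\mathcal{D})$, upward closure of vanishing powers) is exactly what the paper's proof uses, and your two-directional skeleton mirrors Theorem 1's as the paper intends. But the boundary issue you flag at the end is not a hypothetical risk to be checked later --- it is a genuine obstruction, and your own sketch already exposes it. In your $1 \Rightarrow 2$ direction you correctly observe that a longest path $\gamma^{*}$ of length $n$ gives, via Lemma 6, $\beta(A(\mathcal{D})^{k}) \neq [\mathbf{0}]$ for all $1 \le k \le n$, and that maximality gives $\beta(A(\mathcal{D})^{k}) = [\mathbf{0}]$ for all $k > n$. Those two facts together force $\min\{k \mid \beta(A(\mathcal{D})^{k}) = [\mathbf{0}]\} = n+1$, not $n$. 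So the argument you outline, completed honestly, proves $\mathfrak{D}_{_{dag}}(\mathcal{D}) = \min\{k \mid \beta(A(\mathcal{D})^{k}) = [\mathbf{0}]_{|\mathbb{V}(\mathcal{D})|}\} - 1$, which contradicts item 2 of the statement rather than establishing it. There is no way to ``verify the endpoint with care'' and land on the stated formula, because Lemma 6 with $k = n$ directly yields a nonzero entry of $\beta(A(\mathcal{D})^{n})$ whenever a path of length $n$ exists.

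For comparison, the paper's own proof of $1 \Rightarrow 2$ deduces only that no walk of length $n+1$ exists and then asserts ``therefore $n$ would be the index of $A(\mathcal{D})$,'' which does not follow (the nilpotency index is $n+1$); in the very next sentence it justifies minimality of $n$ by exhibiting a walk of length $n$ --- which is precisely the fact that makes $\beta(A(\mathcal{D})^{n}) \neq [\mathbf{0}]$ and so refutes the claim just made. In short: your approach is the right one and is essentially the paper's, but the step you single out as the main obstacle is where both your proposal and the paper's proof fail to close, because the theorem as stated is off by one. The fix is to restate item 2 as $\beta(A(\mathcal{D})^{n+1}) = [\mathbf{0}]_{|\mathbb{V}(\mathcal{D})|}$ with $n+1$ minimal (equivalently, subtract one from the minimum in the displayed formula); with that correction your argument goes through verbatim.
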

\begin{proof}
    \subsubsection*{$1 \implies 2$}
    Assume $\mathfrak{D}_{_{dag}}(\mathcal{D}) = n$ Therefore $ \forall\ u, v \in \mathbb{V}(\mathcal{D})\ \nexists\ \gamma = (u, ..., v)$ such that $ \mathscr{L}(\gamma) = n + 1$. Hence by Lemma 2 and 3, we can see that between any pair of vertices following the direction of the $DAG$, we have $\nexists\ \omega = (u, ..., v) $ so that $ \mathfrak{L}(\omega) = n + 1 $ since $ \mathfrak{D}_{_{dag}}(\mathcal{D}) = n$. Therefore, $n$ would be the index of $A(\mathcal{D}) \implies \beta(A(\mathcal{D})^{n}) = \left[\mathbf{0}\right]_{|\mathbb{V}(\mathcal{D})|}$. We know $n$ is the minimum such number since $\exists \ \omega = (u, ..., v) : \mathfrak{L}(\omega) = \mathfrak{D}_{_{dag}}(\mathcal{D}) = n$ for some $u, v \in \mathbb{V}(\mathcal{D})$.
\end{proof}
\begin{proof}
    \subsubsection*{$2 \implies 1$}
    Assume $\exists\ k : \beta(A(\mathcal{D})^k) = \left[\mathbf{0}\right]_{|\mathbb{V}(\mathcal{D})|}$ where $k$ is the minimum such number to satisfy this property. Additionally, assume that $\mathfrak{D}_{_{dag}}(\mathcal{D}) = n$ where $k \ne n$. By $1 \implies 2$ we know that since $\beta(A(\mathcal{D})^k) = \left[\mathbf{0}\right]_{|\mathbb{V}(\mathcal{D})|}$ then $k > \mathfrak{D}_{_{dag}}(\mathcal{D})$. However, since $k$ is the minimum such number, and $k \ne n$, the assumption is false. Thus, $k = n$ must be true.
\end{proof}

Since the matrix is nilpotent, then we can easily see that $\forall\ k \ge \mathfrak{D}_{dag}(\mathcal{D})$, then $\beta(A(\mathcal{D})^k) = [\textbf{0}]_{|\mathbb{V}(\mathcal{D})|}$. In fact, given that it is all 0-entries, then we technically do not require booleanizing the matrix. However, as we have mentioned previously, computing the product based on Algorithm 1 is more efficient than classic matrix multiplication algorithms. Hence, we decide to use the booleanized form to emphasize that fact. Just as we have done with Trees, we can also generate the longest paths of a Directed Acyclic Graph using the following result.

\begin{corollary}
    Let $\mathcal{D}$ be a Directed Acyclic Graph and denote $\gamma_{a,b}$ to be the unique path between vertices $a$ and $b$ in $\mathcal{D}$. Assume that $\beta(A(\mathcal{D})^n)_{i,j} = 1$. Then,
    $$\beta(A(\mathcal{D})^{n-1})_{i,k} = 1 \text{ and } A(\mathcal{D})_{j,k} = 1 \iff \gamma_{v_i, v_j} = \gamma_{v_i, v_k} \cup (v_j)$$
\end{corollary}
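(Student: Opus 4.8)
The plan is to mirror the argument for the Tree case (Corollary 1.1), but to exploit the stronger structure of Directed Acyclic Graphs established in Lemma 6: because walks and paths are in unique correspondence in $\mathcal{D}$, the single condition $\beta(A(\mathcal{D})^n)_{i,j} = 1$ already pins down the length of the unique path $\gamma_{v_i, v_j}$ to be \emph{exactly} $n$, so no difference of booleanized powers (as Trees required) is needed. First I would restate this consequence explicitly: by Lemma 6, the standing hypothesis $\beta(A(\mathcal{D})^n)_{i,j} = 1$ is equivalent to $\mathscr{L}(\gamma_{v_i, v_j}) = n$, and more generally $\beta(A(\mathcal{D})^{m})_{p,q} = 1$ holds if and only if $\mathscr{L}(\gamma_{v_p, v_q}) = m$. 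The crucial fact underlying everything, drawn from the proof of Lemma 6, is that in a DAG there is at most one walk between any ordered pair of vertices, and that walk is automatically a path.

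For the forward direction I would assume $\beta(A(\mathcal{D})^{n-1})_{i,k} = 1$ together with $A(\mathcal{D})_{j,k} = 1$. The first hypothesis gives, via Lemma 6, a unique path $\gamma_{v_i, v_k}$ of length $n-1$; the second supplies the directed edge joining $v_k$ and $v_j$. Concatenating the path with this edge produces a walk $(v_i, \dots, v_k, v_j)$ of length $n$ from $v_i$ to $v_j$. Since the standing hypothesis forces $\gamma_{v_i, v_j}$ to have length $n$ as well, and uniqueness of walks in $\mathcal{D}$ leaves no room for two distinct length-$n$ walks from $v_i$ to $v_j$, the concatenated walk must coincide with $\gamma_{v_i, v_j}$, yielding $\gamma_{v_i, v_j} = \gamma_{v_i, v_k} \cup (v_j)$.

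For the reverse direction I would assume $\gamma_{v_i, v_j} = \gamma_{v_i, v_k} \cup (v_j)$. Reading off this decomposition, the terminal edge of $\gamma_{v_i, v_j}$ is the edge between $v_k$ and $v_j$, so $A(\mathcal{D})_{j,k} = 1$; moreover deleting the last vertex shows $\gamma_{v_i, v_k}$ is a genuine sub-path with $\mathscr{L}(\gamma_{v_i, v_k}) = \mathscr{L}(\gamma_{v_i, v_j}) - 1 = n - 1$. Applying Lemma 6 once more then returns $\beta(A(\mathcal{D})^{n-1})_{i,k} = 1$, closing the equivalence.

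The only real care needed, and the place I expect the argument to be most fragile, is the orientation bookkeeping: one must check that the entry $A(\mathcal{D})_{j,k}$ encodes the edge oriented from $v_k$ into $v_j$ (so that the concatenation is a valid directed walk), and that $\gamma_{v_i, v_k}$ is truly the \emph{prefix} of $\gamma_{v_i, v_j}$ rather than merely some path of the right length. Both reduce to the uniqueness of directed walks from Lemma 6, which forces a length-$(n-1)$ path to $v_k$ sitting inside a length-$n$ path to $v_j$ to be its initial segment. Once this is spelled out, the remaining steps are the same bookkeeping as in Corollary 1.1, only lighter, since the parity/offset analysis that Trees demand collapses entirely in the acyclic directed setting.
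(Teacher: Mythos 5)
Your proposal is correct and follows essentially the same route as the paper's proof: both directions invoke Lemma 6 to convert the booleanized entries into exact path lengths and then appeal to the uniqueness of (directed) paths in $\mathcal{D}$ to force the decomposition $\gamma_{v_i, v_j} = \gamma_{v_i, v_k} \cup (v_j)$. Your additional care about the concatenated walk necessarily coinciding with the unique path, and about the orientation encoded by $A(\mathcal{D})_{j,k}$, only makes explicit what the paper leaves implicit.
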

\begin{proof}$\implies$\newline
    Assume that $\beta(A(\mathcal{D})^{n-1})_{i,k} = 1 \text{ and } A(\mathcal{D})_{j,k} = 1$. By Lemma 6 we know that $\mathscr{L}(\gamma_{v_i,v_j}) = n$ and $\mathscr{L}(\gamma_{v_i,v_k}) = n-1$ in addition to $(v_k,v_j) \in \mathbb{E}(\mathcal{D})$. Therefore, since there are only unique paths between vertices then the only way to traverse to $v_j$ is by $(v_k, v_j)$. Hence, $\gamma_{v_i, v_j} = \gamma_{v_i, v_k} \cup (v_j)$.\\

\noindent\textit{Proof. }$\Longleftarrow$\newline
    Assume that $\gamma_{v_i, v_j} = \gamma_{v_i, v_k} \cup (v_j)$. Since $\beta(A(\mathcal{D})^n)_{i,j} = 1$ then $\mathscr{L}(\gamma_{v_i,v_j}) = n$, since we have only unique paths and $(v_k, v_j) \in \mathbb{E}(\mathcal{D})$ due to being adjacent in $\gamma_{v_i,v_j}$, then we know that $\mathscr{L}(\gamma_{v_i,v_k}) = n-1$. Hence, $\beta(A(\mathcal{D})^{n-1})_{i,k} = 1$.
\end{proof}

\newpage
\section{Longest Path Length Algorithms}
The following two algorithms are a binary search for the length of the longest path of Trees and Directed Acyclic Graphs proven correct as a result of Theorem 1 and Theorem 4.

\begin{algorithm}[H]
\caption{Binary Diameter Search $\left[Tree\right]$}\label{euclid}
\begin{algorithmic}[1]
\Procedure{$\mathfrak{D}$}{$\Gamma, l = 0, h = |\mathbb{E}(\Gamma)|, \epsilon = 1$}
\State $m \gets \lfloor \frac{l+h}{2}\rfloor$
    \If{$h - l \le \epsilon$}
        \State $\text{return}\ (l, h)$
    \EndIf
    \If{$l = h$}
        \State $\text{return } m$
    \EndIf
    \If{$\beta(A(\Gamma)^{m+1}) = \beta(A(\Gamma)^{m-1})$}
        \State $\text{return } \mathfrak{D}(\Gamma, l,m)$
    \Else
        \State $\text{return } \mathfrak{D}(\Gamma, m,h)$
    \EndIf
\EndProcedure
\end{algorithmic}
\end{algorithm}

\vspace{-20pt}

\begin{algorithm}[H]
\caption{Binary Diameter Search $\left[Directed\ Acyclic\ Graph\right]$}\label{euclid}
\begin{algorithmic}[1]
\Procedure{$\mathfrak{D}_{_{dag}}$}{$\mathcal{D}, l = 0, h = |\mathbb{E}(\mathcal{D})|, \epsilon = 1$}
\State $m \gets \lfloor \frac{l+h}{2}\rfloor$
    \If{$h - l \le \epsilon$}
        \State $\text{return}\ (l, h)$
    \EndIf
    \If{$l = h$}
        \State $\text{return } m$
    \EndIf
    \If{$\beta(A(\mathcal{D})^{m}) = \left[\mathbf{0}\right]_{|\mathbb{V}(\mathcal{D})|}$}
        \State $\text{return } \mathfrak{D}_{_{dag}}(\mathcal{D}, l,m)$
    \Else
        \State $\text{return } \mathfrak{D}_{_{dag}}(\mathcal{D}, m,h)$
    \EndIf
\EndProcedure
\end{algorithmic}
\end{algorithm}

For both Uniform Block Graphs and Block Graphs, we use a binary search method to find the length of the longest chain of blocks proven correct as a result of Corollary 2.1.

\begin{algorithm}[H]
\caption{Binary Search for the Length of the Longest Chain $\left[Block\ Graph\right]$}\label{euclid}
\begin{algorithmic}[1]
\Procedure{$Longest\_Chain\_Length$}{$\Gamma, l = 0, h = |\mathbb{E}(\Gamma)|, \epsilon = 1$}
\State $m \gets \lfloor \frac{l+h}{2}\rfloor$
    \If{$h - l \le \epsilon$}
        \State $\text{return}\ (l, h)$
    \EndIf
    \If{$l = h$}
        \State $\text{return } m$
    \EndIf
    \If{$\beta(A(\Gamma)^{m}) = J_{|\mathbb{V}(\Gamma)|}$}
        \State $\text{return } Longest\_Chain\_Length(\Gamma, l,m)$
    \Else
        \State $\text{return } Longest\_Chain\_Length(\Gamma, m,h)$
    \EndIf
\EndProcedure
\end{algorithmic}
\end{algorithm}
Since Uniform Block Graphs are a special case, we can simply find the length of the longest chain, and take its product with the longest path of the block size subtracting one. It is proven correct as a result of Theorem 2.

\begin{algorithm}[H]
\caption{Binary LP Search $\left[Uniform\ Block\ Graph\right]$}\label{euclid}
\begin{algorithmic}[1]
\Procedure{$\mathscr{LP}_U$}{$\Gamma, l = 0, h = |\mathbb{E}(\Gamma)|, \epsilon = 1$}
\State $\text{return } Longest\_Chain\_Length(\Gamma, l, h, \epsilon) \cdot (\omega(\Gamma) - 1)$
\EndProcedure
\end{algorithmic}
\end{algorithm}

As previously mentioned, to find the length of the longest path of general Block Graphs, we require the longest chain as input. To find the longest chain, we can use Kruskal's minimal spanning tree algorithm. We denote this as a procedure $Kruskal\_MST(\Gamma)$ taking in a block graph and returning the minimal spanning tree. We know that $Kruskal\_MST \in O(m log(m))$ where $m$ is the size of $\Gamma$ \cite{Kruskal, Kleinberg}. In addition, to generate the longest path of the minimal spanning tree, we may use our algorithm $Generate\_\mathfrak{D}(\Gamma)$ in Algorithm 8 of Section 5 or Dijkstra's longest path algorithm in $O(n)$ where $n$ is the order of the spanning tree \cite{Bulterman}. Denote the procedure as $Longest\_Path\_of\_Tree(T)$.

Furthermore, we know that the longest path of the minimal spanning tree of a Block Graph must have vertices of the vertex separator of $\Gamma$ in addition to two nodes at the end-points of the longest chain. Hence, given the longest path of the minimal spanning tree, $\gamma_l^{min} = (v_1, ..., v_k)$, then we know that $v_1$ and $v_k$ must be in `leaf-blocks' and every $v_i$ so that $1 \neq i \neq k$ is an element of the vertex separator of the Block Graph. Since every element of the vertex separator is part of more than one block, what we can do is check every edge $\{v_i, v_{i+1}\}$ belonging to $\gamma_l^{min}$ which corresponds uniquely to only one block. Through that, we can then search for which block contains that edge, and that's the block that must be in the longest chain. Otherwise, it is also possible to take the intersection of the neighborhood of each pair of vertices as $N(v_i) \cap N(v_{i+1}) \cup \{v_i, v_{i+1}\}$ which will be the vertex set of the corresponding block.

\begin{algorithm}[H]
\caption{Generate the Longest Chain of a Block Graph $\left[Block\ Graph\right]$}\label{euclid}
\begin{algorithmic}[1]
\Procedure{$Generate\_\mathcal{C}_l$}{$\Gamma$}
\State $T \gets Kruskal\_MST(\Gamma)$
\State $\gamma_l^{min} \gets Longest\_Path\_of\_Tree(T)$
\For{$\{v_i,v_{i+1}\} \in \gamma_l^{min}$}
    \State $V_i \gets N(v_i) \cap N(v_{i+1}) \cup \{v_i, v_{i+1}\}$
    \State $B_i \gets (V_i, \mathbb{E}(V_i))$
    \State $\mathcal{C}_l \gets \mathcal{C}_l \cup (B_i)$
\EndFor
\State $\text{return } \mathcal{C}_l$
\EndProcedure
\end{algorithmic}
\end{algorithm}

Using this procedure, we can then take it as input for the algorithm to actually compute the longest path of a Block Graph.

\begin{algorithm}[H]
\caption{Sum for the Length of the Longest Path $\left[Block\ Graph\right]$}\label{euclid}
\begin{algorithmic}[1]
\Procedure{$\mathscr{LP}$}{$\Gamma$}
    \State $\mathcal{C}_l \gets Generate\_\mathcal{C}_l(\Gamma)$
    \State $sum \gets 0$
    \For{$B \in \mathcal{C}_l$}
        \State $sum \gets sum + |\mathbb{V}(B)| - 1$
    \EndFor
\EndProcedure
\end{algorithmic}
\end{algorithm}
Or we can equivalently use the sum in Theorem 3 as a method to `count up' the value of $|\mathbb{V}(B)| - 1$ for every $B \in \mathcal{C}_l$. The real interest in this case however, is to actually \textit{generate} the longest path. Not only that, but also to generate \textit{every} longest path. In the case of Trees and Directed Acyclic Graphs, we can do this with the result of Corollary 1.1 and Corollary 4.1. For Trees, referring to Corollary 1.1, we can choose $n$ to be the longest path, then generate the longest path `backwards' going down from $n$ to $1$. Similarly, we can do the same for Directed Acyclic Graphs in Corollary 4.1 by choosing $n$ to be the longest path.

\newpage
\section{Path Generating Algorithms}
Once the length of the longest path is achieved, we are able to generate the paths that lead to the longest path. Furthermore, these algorithms give a method to not only generate every longest path, but also paths of any length.
\begin{algorithm}[H]
\caption{Generate $\gamma_{\mathfrak{D}}$ $\left[Tree\right]$}\label{euclid}
\begin{algorithmic}[1]
\Procedure{$Generate\_\mathfrak{D}$}{$\Gamma$}
\State $\gamma \gets \left[\gamma_1, \gamma_2, ...\right]; t \gets 0$
    \For{$i \gets 1, ..., |\mathbb{V}(\Gamma)|$}
        \For{$j \gets i + 1, ..., |\mathbb{V}(\Gamma)|$}
            \If{$(\beta(A(\Gamma)^{\mathfrak{D}(\Gamma)}) - \beta(A(\Gamma))^{\mathfrak{D}(\Gamma)-2})_{i,j} = 1$}
                \State $s \gets i; d \gets j; n \gets \mathfrak{D}(\Gamma) - 1; t \gets t + 1$
                \State $\gamma_t \gets (v_j)$
        \While{$n > 1 \land (\beta(A(\Gamma)^n) - \beta(A(\Gamma))^{n-2})_{s,k} = 1 \land \{v_k, v_d\} \in \mathbb{E}(\Gamma)$}
                \State $\gamma_{t} \gets (v_k) \cup \gamma_{t}$
                \State $k \gets d; n \gets n - 1$
        \EndWhile
            \EndIf
        \EndFor
    \EndFor
    \State $\textbf{return } \gamma$
\EndProcedure
\end{algorithmic}
\end{algorithm}

As we can see from Algorithm 8, we emphasize the fact that this method not only generates a single path of the largest length, but \textit{every path} of that length. This is distinct from Dijkstra's longest path algorithm, as with uniform weights on all edges, the algorithm will only generate a single longest path, while there still might be many that are untouched. To generate all longest paths using Dijkstra's method requires repeating the algorithm multiple times while keeping track of previously-visited nodes and recursively going backwards until all paths are found. This is similar with Directed Acyclic Graphs, as it is possible to use the same algorithm, and we observe the same problem.

\begin{algorithm}[H]
\caption{Generate $\gamma_{\mathfrak{D}_{_{dag}}} \left[Directed\ Acyclic\ Graph\right]$}\label{euclid}
\begin{algorithmic}[1]
\Procedure{$Generate\_\mathfrak{D}_{_{dag}}$}{$\mathcal{D}$}
    \State $\gamma \gets \left[\gamma_1, \gamma_2, ...\right]; t \gets 0$
    \For{$i \gets 1, ..., |\mathbb{V}(\mathcal{D})|$}
        \For{$j \gets 1, ..., |\mathbb{V}(\mathcal{D})|$}
            \If{$(\beta(A(\mathcal{D})^{\mathfrak{D}_{_{dag}}(\mathcal{D})}) - \beta(A(\mathcal{D}))^{\mathfrak{D}_{_{dag}}(\mathcal{D})-1})_{i,j} = 1$}
                \State $s \gets i; d \gets j; n \gets \mathfrak{D}_{_{dag}}(\mathcal{D}) - 1; t \gets t + 1$
                \State $\gamma_t \gets (v_j)$
        \While{$n > 0 \land (\beta(A(\mathcal{D})^n) - \beta(A(\mathcal{D})^{n-1}))_{s,k} = 1 \land \{v_k, v_d\} \in \mathbb{E}(\mathcal{D})$}
                \State $\gamma_t \gets (v_k) \cup \gamma_t$
                \State $k \gets d; n \gets n - 1$
        \EndWhile
            \EndIf
        \EndFor
    \EndFor
        \State $\textbf{return } \gamma$
\EndProcedure
\end{algorithmic}
\end{algorithm}

Both Algorithm 8 and 9 are proven correct by Corollary 1.1 and Corollary 4.1 respectively. This displays the usefulness of algebraic techniques, as the algorithms are a result of the theory. Which can be analyzed and proven more rigorously than compared to purely algorithmic techniques as we have seen with the proof of correctness of Dijkstra's longest path algorithm \cite{Bulterman}. The case for Block Graphs is different, as we do not have unique paths that we can exploit as we have in Corollary 1.1 and Corollary 4.1. However, what we do have is the longest chain, using the longest chain we can then traverse through only the vertices belonging to the longest path of the Block Graph.
\newpage

To be able to generate \textit{all} longest paths in Block Graphs, similarly to Trees, we require \textit{all} longest chains. As mentioned, Dijkstra's longest path algorithm will not generate every longest path of the spanning tree, which is required to generate every longest chain. Hence, we can augment Algorithm 6 to use Algorithm 8 in $Longest\_Path\_of\_Tree$ to generate all longest paths of the spanning tree, that way we can then generate all longest chains, and hence all longest paths of the Block Graph. Here we can simply use Algorithm 6 to generate all longest chains.

\begin{algorithm}[H]
\caption{Generate All Longest Chains of a Block Graph $\left[Block\ Graph\right]$}\label{euclid}
\begin{algorithmic}[1]
\Procedure{$Generate\_all\_\mathcal{C}_l$}{$\Gamma$}
    \State $T \gets Kruskal\_MST(\Gamma)$
    \State $\gamma \gets Generate\_\mathfrak{D}(T)$
    \State $\mathcal{C} \gets []$
    \For{$\gamma_i \in \gamma$}
        \State $\mathcal{C} \gets \mathcal{C} \cup (Generate\_\mathcal{C}_l(\gamma_i))$
    \EndFor
    \State $\text{return } \mathcal{C}$
\EndProcedure
\end{algorithmic}
\end{algorithm}

Therefore we get all longest chains in a Block Graph, using this, we can then generate the longest path of the Block Graph. Given a longest chain $\mathcal{C}_l = (B_1, B_2, ..., B_\mathcal{L})$. We know that we can find the longest path by traversing through all vertices of each Block as long as the vertices belonging to the vertex separator is visited last. In essence, as shown in the proof of Theorem 3, the structure of $\gamma_l^{max}$ is found to be
\begin{align*}
        \gamma_l^{max} = \ (\underbrace{a_l,r_1, r_2, ..., r_{|\mathbb{V}(B_a)| - 2}}_{|\mathbb{V}(B_a)| - 1\text{ elements }\in N(v_a) \cap \mathbb{V}(B_a)}, v_a, ..., v_k, ..., v_b, \underbrace{t_{|\mathbb{V}(B_b)| - 2}, ..., t_2, t_1, b_l}_{|\mathbb{V}(B_b)| - 1 \text{ elements }\in N(v_b) \cap \mathbb{V}(B_b)})
\end{align*}
Where every $v_i$ is in the vertex separator. Meaning that we can begin with $a_l \in B_1$ as the starting vertex, and begin with vertices at the end-block of the longest chain, specifically $B_\mathcal{L}$. Then append all vertices of $B_\mathcal{L}$ making sure to append the vertex element of the vertex separator last, then repeat this process for all blocks in the longest chain. Hence, we generate all longest paths with every possible way to append non-vertex separator vertices of each block. Essentially, by generating one longest path of one chain, then permuting non-vertex separator elements, we get all possible longest paths of a singe longest chain. We repeat this process for all longest chains, and we generate every path of the longest length.

We use a procedure to `permute non-vertex separator elements' to generate every possible longest path for Block Graphs which we denote as $Permute\_non\_vertex\_separator\_elements$. It can be understood that this procedure simply generates all lists so that every neighborhood of every block \textit{other than the vertex separator element} is permuted, and for every permutation it generates a new path. Hence, the possible permutations can be computed. It is precisely the number of non-vertex separator elements in each block. Hence, the leaf-blocks will have $|\mathbb{V}(B)| - 1$ elements to permute, and non-leaf blocks would have $|\mathbb{V}(B)| - 1$ elements to permute. Therefore, assuming we have $\mathcal{L}$ blocks in a single longest chain, we find that we have 
\vspace{-10pt}
$$(|\mathbb{V}(B_1)| - 1)! + (|\mathbb{V}(B_\mathcal{L})| - 1)! + \sum_{i = 2}^{\mathcal{L} - 1}(|\mathbb{V}(B_i)| - 2)!$$
possible longest paths in a one chain. Given that all longest chains must be of the same length, if there are $x$ number of longest chains, and we denote $B_{i,j}$ to be the $i^{th}$ block of the $j^{th}$ chain, then
\vspace{-10pt}
$$\sum_{j = 1}^{x} \left( (|\mathbb{V}(B_{1,j})| - 1)! + (|\mathbb{V}(B_{\mathcal{L}, j})| - 1)! + \sum_{i = 2}^{\mathcal{L} - 1}(|\mathbb{V}(B_{i,j})| - 2)!\right)$$
number of longest paths that can exist and be generated using our algorithms.

It must be noted that in the case of Uniform Block Graphs, we know that $|\mathbb{V}(B_i)| = c$ for some $c \in \mathbb{N}$. Hence, we can compute the number of possible longest paths in a Uniform Block Graph as
\begin{align*}
    \sum_{j = 1}^{x} \left( (c - 1)! + (c - 1)! + \sum_{i = 2}^{\mathcal{L} - 1}(c - 2)!\right) = x \cdot \left( 2(c - 1)! + (\mathcal{L}-2)\cdot(c - 2)!\right)
\end{align*}

Below is the algorithm to generate every possible longest path of a Block Graph. Given that Uniform Block Graphs are a special case, this also provides the solution for them.
\begin{algorithm}[H]
\caption{Generate All Longest Paths of a Block Graph $\gamma_{l}^{max}$ $\left[Block\ Graph\right]$}\label{euclid}
\begin{algorithmic}[1]
\Procedure{$Generate\_all\_\mathscr{LP}$}{$\Gamma$}
    \State $\mathcal{C} \gets Generate\_all\_\mathcal{C}_l(\Gamma);\gamma \gets []$
    \For{$\mathcal{C}_k \in \mathcal{C}$}
        \State $min\_deg\_in\_B_1 \gets \infty$
        \For{$v_l \in B_1(\mathcal{C}_k)$}
            \If{$min\_deg\_in\_B_1 > deg(v_l)$}
                \State $min\_deg\_in\_B_1 \gets deg(v_l)$
                \State $non\_vertex\_separator\_starting\_v \gets v_l$
            \EndIf
        \EndFor
        \State $current\_longest\_path \gets (non\_vertex\_separator\_starting\_v)$
            \For{$n \gets 1, 2, ..., \mathcal{L}$}
                \State $max\_v\_deg \gets 0$
                \For{$v_i \in B_n(\mathcal{C}_K)$}
                    \If{$max\_v\_deg < deg(v_i)$}
                        \State $max\_v\_deg \gets deg(v_i)$
                        \State $v\_separator \gets v_i$
                    \EndIf
                \EndFor
                \For{$v_i \in B_n(\mathcal{C}_K)$}
                    \If{$v_i \neq v\_separator$}
                        \State $current\_longest\_path \gets current\_longest\_path \cup (v_i)$
                    \EndIf
                \EndFor
                \State $current\_longest\_path \gets current\_longest\_path \cup (v\_separator)$
            \EndFor
        \State $\gamma \gets \gamma \cup [current\_longest\_path]; current\_longest\_path \gets ()$
    \EndFor
    \State $\text{return } Permute\_non\_vertex\_separator\_elements(\gamma)$
\EndProcedure
\end{algorithmic}
\end{algorithm}
With this, we can now compute the asymptotic complexity of our algorithms
\section{Complexity Analysis}
Define the order of concern to be, 
$n := |\mathbb{V}(\Gamma)|, m := |\mathbb{E}(\Gamma)|$, 
and $\delta$ to be a small value that depends on the number of bytes of an integer. Additionally, note that for Trees and Directed Acyclic Graphs, $\mathfrak{D}_{_{dag}}(\Gamma) = \mathfrak{D}(\Gamma) = \mathscr{L}(\gamma_l^{max}) = \mathscr{LP}(\Gamma)$. Therefore the results shown below apply to algorithms stated below. Firstly, we find the asymptotic complexity of just computing $\beta(A(\Gamma)^2)$.
%\subsection{Linear Search}
\begin{corollary}
    The asymptotic complexity of Algorithm 1 (Booleanization)
    $$\beta(A(\Gamma_1) \times A(\Gamma_2)) \in O(\delta n^{2})$$
\end{corollary}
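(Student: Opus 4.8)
The plan is to analyze Algorithm 1 directly by counting the elementary operations performed in its two nested double-loops, since the algorithm's cost is dominated entirely by these loops. First I would observe that the algorithm consists of two separate phases, each structured as a pair of nested \texttt{for} loops running over all pairs $(i,j)$ with $i,j \in \{1, \dots, n\}$ where $n = |\mathbb{V}(\Gamma)|$. Thus each phase performs exactly $n^2$ iterations of its inner body.

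For the first phase, each iteration of the inner loop performs a constant-bounded comparison (checking whether $A(\Gamma_1)_{i,j} = 1$ and $A(\Gamma_2)_{j,i} = 1$) followed, when the condition holds, by two updates of the form $A(\Gamma')_{i,-} \gets A(\Gamma')_{i,-} + 2^j$. The key point I would stress is that these updates are arithmetic on integers whose bit-length is bounded by the number of vertices, so each such operation costs $O(\delta)$ rather than $O(1)$, where $\delta$ captures the word-size dependence already introduced in the discussion preceding Lemma 1. Hence the first phase runs in $O(\delta n^2)$ time. For the second phase, each of the $n^2$ iterations computes a single logical-and $A(\Gamma)_{i,-} \land A(\Gamma)_{-,j}$ of two $n$-bit quantities and compares the result to zero; by the same reasoning this bitwise operation costs $O(\delta)$, so the second phase is likewise $O(\delta n^2)$.

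Summing the two phases, the total cost is $O(\delta n^2) + O(\delta n^2) = O(\delta n^2)$, which is the claimed bound. The correctness of the output was already established by Lemma 1, so here I only need the running-time count; the corollary is purely a complexity statement built on top of the equivalence proved there. The only genuinely delicate point, and the one I would be careful to justify rather than wave away, is the treatment of the bitwise-and and the integer additions as $O(\delta)$ operations rather than $O(1)$ or $O(n)$: this is precisely where the parameter $\delta$ enters, reflecting the assumption (stated in the text) that logical-and on machine integers is computed in time proportional to the number of machine words spanning the bit-vector. I expect this to be the main conceptual obstacle, since a reader might otherwise object that operating on $n$-bit integers should cost $\Theta(n)$, collapsing the bound to $O(n^3)$; the resolution is to appeal to the word-RAM-style convention encoded in $\delta$, under which these operations are treated as (near-)constant-cost primitives.
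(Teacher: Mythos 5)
Your proposal is correct and follows essentially the same route as the paper: the paper's proof is just the operation count $\sum_{i}\sum_{j}\sum_{k=1}^{\delta}1 + \sum_{i}\sum_{j}1 = O(\delta n^2 + n^2) = O(\delta n^2)$ over the two nested double-loops, which is exactly your phase-by-phase tally. Your added discussion of why the bitwise operations are charged $O(\delta)$ rather than $O(1)$ or $\Theta(n)$ is a more careful justification of the convention the paper simply assumes, but it does not change the argument.
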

\begin{proof}$ $\newline
We can simply compute the number of operations as
    $$\beta(A(\Gamma_1) \times A(\Gamma_2)) \in O\left(\sum_{i=1}^{|\mathbb{V}(\Gamma)|}\sum_{j=1}^{|\mathbb{V}(\Gamma)|}\sum_{k=1}^{\delta}1 + \sum_{i=1}^{|\mathbb{V}(\Gamma)|}\sum_{j=1}^{|\mathbb{V}(\Gamma)|}1\right) = O(\delta n^{2} + n^2) = O(\delta n^{2})$$
\end{proof}
Knowing this, we still require computing the booleanized matrix to any power, so we note that
\begin{remark}
    The asymptotic complexity of computing $\beta(A(\Gamma)^k)$ is in $O(\delta n^2 \cdot log_2(k))$ by the use of the binary exponentiation such that
    $$\beta(A(\Gamma)^n) =  
\begin{cases} \beta((A(\Gamma)^2)^{\frac{n}{2}}),& n \text{ is even}\\
    \beta(A(\Gamma) \times (A(\Gamma)^{\frac{n-1}{2}})^2),& n \text{ is odd}\\ \end{cases}$$
\end{remark}
The difficulty here is to compute the asymptotic complexities of the remaining algorithms understanding that the exponent varies for every iteration. Take the case of Algorithm 2 to find the length of the longest path of a Tree. We find that we begin with the exponent $e_1 = \frac{|\mathbb{E}(\Gamma)|}{2}$, however with further iterations, the next exponent depends on where the longest path actually lies, if it is less than half the size, or more than half the size. In the worst case scenario, we can simply assume the maximum number of iterations where the longest path is equal to the size of the graph. In the case of Trees and Directed Acyclic Graphs, that is possible, however it can clearly be seen that in the case of Uniform Block Graphs and Block Graphs, that is not the case and is to be taken as an upper bound. Knowing this, we can compute the asymptotic complexity as follows

\begin{corollary}
    The asymptotic complexity of Algorithm 2, 3, 4, and 5 to find the length of the longest path of Trees, Directed Acyclic Graphs, the longest chain of a Block Graph, and the longest path of Uniform Block Graphs respectively is of the order
    $$\mathfrak{D}, \mathfrak{D}_{_{dag}}, Longest\_Chain\_Length, \mathscr{LP}_U \in O(\delta n^{2} \cdot log_2^2(m))$$
\end{corollary}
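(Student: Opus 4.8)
The plan is to treat Algorithms 2, 3, 4, and 5 uniformly, since each is a binary search over the interval $[0, |\mathbb{E}(\Gamma)|]$ whose per-step work is dominated by a single booleanized matrix power. First I would bound the number of recursive calls. Each of these algorithms begins with $l = 0$ and $h = |\mathbb{E}(\Gamma)| = m$, and every recursive call replaces the current search interval by either $[l, \text{mid}]$ or $[\text{mid}, h]$, halving its width until the termination test $h - l \le \epsilon = 1$ fires. Hence the recursion depth, equivalently the number of iterations, is $O(\log_2(m))$.

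Second I would bound the cost of a single iteration. In each call the algorithm evaluates a booleanized matrix power of the form $\beta(A(\Gamma)^k)$; for Algorithm 2 this is the pair $\beta(A(\Gamma)^{\text{mid}+1})$ and $\beta(A(\Gamma)^{\text{mid}-1})$, which is still only a constant number of such powers, and for Algorithms 3 and 4 it is a single power. By the Remark preceding this corollary, computing $\beta(A(\Gamma)^k)$ via binary exponentiation costs $O(\delta n^2 \cdot \log_2(k))$. Since the exponent used at every iteration is the midpoint of a subinterval of $[0, m]$, we have $k \le m + 1$, so $\log_2(k) = O(\log_2(m))$, which gives a uniform per-iteration bound of $O(\delta n^2 \cdot \log_2(m))$.

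Multiplying the two bounds then yields $O(\log_2(m)) \cdot O(\delta n^2 \cdot \log_2(m)) = O(\delta n^2 \cdot \log_2^2(m))$, the claimed complexity. For Algorithm 5 I would additionally observe that it merely invokes Algorithm 4 and multiplies the result by $(\omega(\Gamma) - 1)$; both that multiplication and the computation of $\omega(\Gamma)$ contribute only lower-order terms and are absorbed into the bound, so the same asymptotic estimate holds.

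The main obstacle is exactly the point the paper flags before stating the corollary: the exponent fed to the matrix-power routine is not fixed but changes from one iteration to the next, so one cannot simply multiply a single fixed exponentiation cost by the iteration count. The resolution is to replace the varying exponent by its worst-case upper bound $m$ uniformly across all iterations, which is legitimate because $\log_2$ is monotone and every midpoint lies in $[0, m]$. One could in principle extract a finer estimate by summing $\log_2(k_i)$ over the actual sequence of exponents $k_i$, but the uniform bound already produces the stated $\log_2^2(m)$ factor, and the extremal case in which the longest path equals the graph size, attainable for Trees and Directed Acyclic Graphs, shows the bound cannot be improved in general.
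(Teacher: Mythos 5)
Your proposal is correct and follows essentially the same route as the paper's proof: bound the binary search at $O(\log_2(m))$ iterations, bound each iteration by the cost $O(\delta n^2 \cdot \log_2(m))$ of a booleanized exponentiation with exponent at most $m$, and multiply. Your treatment is somewhat more careful than the paper's --- you explicitly justify replacing the varying exponent by its worst-case bound and you handle Algorithm 5's extra multiplication by $(\omega(\Gamma)-1)$ separately --- but the decomposition and the key estimates are identical.
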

\begin{proof}$ $\newline
The search starts at $\frac{|\mathbb{E}(\Gamma)|}{2}$ continuing as a standard Binary Search algorithm with an allocated condition. Thus, a general Binary Search selection results in $k = \left\lceil log_2\left(\frac{|\mathbb{E}(\Gamma)|}{\epsilon}\right)\right\rceil$ where $\epsilon = 1$ for exact results. In every iteration, we perform a booleanized exponentiation of the adjacency matrix, hence every iteration has $\delta n^2\cdot log_2(m)$ operations assuming the worst case scenario. Giving us the final complexity as
\begin{align*}
    O\left(\delta n^{2}\cdot k\cdot log_2(|\mathbb{E}(\Gamma)|)\right) = O\left(\delta n^{2} \cdot log_2(m) \cdot log_2(m)\right) = O\left(\delta n^{2}\cdot log^2_2(m)\right)
\end{align*}
\end{proof}
Note that running Algorithm 2, 3, 4, and 5 by default achieves the exact value of the length of the longest path if $\epsilon = 1$. However, if an error bound is desired without the requirement of an exact value, then error reduces by a factor of $2$ for every iteration $k$.

\begin{remark}
    To compute what the error bound may be for a given value of the number of iterations, we can find that simply as
    $$\epsilon \simeq \frac{m}{2^k}$$
\end{remark}

Now for Block Graphs, recall that Kruskal's minimal spanning tree algorithm is of the order $O(m log(m))$ and if Dijkstra's longest path algorithm for trees is chosen to generate the longest path of the minimal spanning tree, then it is of the order $O(n)$.

\begin{corollary}
    Let $\mathcal{L}$ be the length of the longest chain. Then, the asymptotic complexity of Algorithm 6, to generate the longest chain of a Block graph, is
    $$Generate\_\mathcal{C}_l \in O\left(n + mlog(m) + \mathcal{L}\right)$$
\end{corollary}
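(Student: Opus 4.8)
The plan is to treat Algorithm 6 as a sequential composition of three independent stages and to bound the total running time by summing their individual costs, since there is no nesting or repetition across the stages.

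First I would bound line 2. The call $Kruskal\_MST(\Gamma)$ produces the minimal spanning tree $T$, and I would simply invoke the bound already stated in the text that Kruskal's algorithm runs in $O(m\log m)$. Second I would bound line 3, $Longest\_Path\_of\_Tree(T)$. Because $T$ is a spanning tree of $\Gamma$, it shares the vertex set $\mathbb{V}(\Gamma)$ and therefore has order $n$; applying the cited $O(n)$ bound for Dijkstra's longest-path algorithm on trees gives a contribution of $O(n)$ for this stage.

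Next I would analyze the for loop on lines 4--8. By the Remark relating the longest chain of $\Gamma$ to the longest path of its spanning tree, together with Corollary 2.1 which identifies $\mathcal{L}$ with $\partial(a_l,b_l) = \mathscr{L}(\gamma_l^{min})$, the path $\gamma_l^{min}$ has length $\mathcal{L}$ and hence exactly $O(\mathcal{L})$ consecutive edge-pairs $\{v_i,v_{i+1}\}$ over which to iterate. Treating each inner operation---the intersection $N(v_i)\cap N(v_{i+1})$, the union with $\{v_i,v_{i+1}\}$, the block construction $B_i$, and the append into $\mathcal{C}_l$---as constant-time, the loop contributes $O(\mathcal{L})$. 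Summing the three stages then yields
$$O(m\log m) + O(n) + O(\mathcal{L}) = O\!\left(n + m\log m + \mathcal{L}\right),$$
which is the claimed bound.

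The main obstacle is justifying that each iteration of the loop is genuinely $O(1)$. The operation $N(v_i)\cap N(v_{i+1})$ is a set intersection whose naive cost is governed by the degrees of $v_i$ and $v_{i+1}$, which in a general block graph need not be bounded by a constant; a fully rigorous argument would either assume a data-structure model in which these adjacency lookups and set operations are constant-time, or would otherwise have to absorb a $\sum_i \deg(v_i)$ term into the total. I would therefore make the constant-time assumption explicit, noting that under it the loop cost collapses to $O(\mathcal{L})$ and the three stages combine to give the stated complexity; without it, an additional degree-dependent term would appear, and one would need to verify that it is dominated by the $O(m\log m)$ contribution coming from the spanning-tree computation.
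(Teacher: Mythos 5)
Your decomposition is exactly the paper's: Kruskal's MST at $O(m\log m)$, Dijkstra's longest-path on the spanning tree at $O(n)$, and $O(\mathcal{L})$ iterations of the chain-building loop, summed to give the stated bound. Your additional caveat about the per-iteration cost of $N(v_i)\cap N(v_{i+1})$ is a legitimate refinement the paper glosses over, but it does not change the approach.
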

\begin{proof}
    Since Kruskal's Algorithm is of the order $O(mlog(m))$ and with a choice of Dijkstra's algorithm to generate the longest path of a tree, we have an addition of $O(n)$. Finally, the number of operations to construct the chain is equal to the size of the chain, which is of the order $O(\mathcal{L})$. Therefore, we find the order to be
    $O(n + mlog(m) + \mathcal{L})$
\end{proof}
With the understanding that we may choose a different algorithm to generate the longest path, the computational complexity of Algorithm 6 is dependant on that choice as we will see when computing the asymptotic complexity of Algorithm 10. Since we know that Algorithm 7 depends on Algorithm 6, we find its complexity to be as follows.
\begin{corollary}
    Let $\mathcal{L}$ be the length of the longest chain. Then, the asymptotic complexity of Algorithm 7, to find the length of the longest path of Block Graphs, is
    $$\mathscr{LP} \in O\left(n + mlog(m) + \mathcal{L}\right)$$
\end{corollary}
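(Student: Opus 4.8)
The plan is to decompose Algorithm 7 into its two sequential stages and bound each separately before combining. First I would observe that the procedure $\mathscr{LP}$ consists of exactly two parts: a single call to $Generate\_\mathcal{C}_l(\Gamma)$ (line 2), followed by the summation loop (lines 4--6) that accumulates $|\mathbb{V}(B)| - 1$ over each block $B \in \mathcal{C}_l$. Because these parts execute one after the other, the total asymptotic cost is simply the sum of their individual costs, so it suffices to bound each.

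For the first stage I would invoke Corollary 4.3 directly as a black box, which already establishes that $Generate\_\mathcal{C}_l \in O(n + m\log(m) + \mathcal{L})$; no further analysis of Kruskal's algorithm or the tree-path generation is required here, since those were already absorbed into that earlier bound. For the second stage I would argue that the loop iterates exactly once per block of the longest chain $\mathcal{C}_l = (B_1, \ldots, B_{\mathcal{L}})$, and since the chain contains precisely $\mathcal{L}$ blocks there are $\mathcal{L}$ iterations. Each iteration reads the cardinality $|\mathbb{V}(B)|$ and performs one subtraction and one addition, all in $O(1)$, so the loop runs in $O(\mathcal{L})$.

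Combining the two stages then gives $O(n + m\log(m) + \mathcal{L}) + O(\mathcal{L}) = O(n + m\log(m) + \mathcal{L})$, since the additional $O(\mathcal{L})$ term from the summation loop is absorbed by the $\mathcal{L}$ term already present in the bound from Corollary 4.3. This yields the claimed complexity.

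The main obstacle is really a subtlety rather than a genuine difficulty: I must justify that each loop iteration costs $O(1)$, i.e.\ that $|\mathbb{V}(B)|$ can be retrieved in constant time rather than recomputed by traversing each block. This holds because the blocks are already materialized as explicit vertex-set objects during the execution of Algorithm 6, so their cardinalities are available immediately. The dominant contribution throughout is the $Generate\_\mathcal{C}_l$ call, and in particular the $m\log(m)$ term arising from Kruskal's algorithm, so the summation loop never affects the leading order.
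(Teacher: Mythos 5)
Your proposal is correct and follows essentially the same route as the paper: invoke the bound for $Generate\_\mathcal{C}_l$ (Corollary 4.3) and add an $O(\mathcal{L})$ term for the summation loop, which is absorbed, yielding $O(n + m\log(m) + \mathcal{L} + \mathcal{L}) = O(n + m\log(m) + \mathcal{L})$. Your extra remark justifying the $O(1)$ cost of reading $|\mathbb{V}(B)|$ is a minor elaboration the paper leaves implicit.
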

\begin{proof}$ $\newline
    With the same assumptions as for Algorithm 6, we use the $Generate\_\mathcal{C}_l$ procedure and have a straightforward sum over the orders of each block of the longest chain, giving us the complexity $O(n + mlog(m) + \mathcal{L} + \mathcal{L}) = O(n + mlog(m) + \mathcal{L})$
\end{proof}
Now moving on to path-generating algorithms, we have our algebraic procedures based on Corollary 1.1 and Corollary 4.1 once again. We find Algorithm 8 and Algorithm 9 to have the same structure, hence we find
\begin{corollary}
    Assume there are $x$ longest-paths and the length of the longest path is $\ell$. Then the asymptotic complexity of Algorithm 8 and 9, to generate all possible longest paths in Trees and Directed Acyclic Graphs respectively, to be
    $$Generate\_\mathfrak{D}, Generate\_\mathfrak{D}_{_{dag}} \in O\left(x\cdot \delta n^2 log_2^2(m) \cdot \ell\right)$$
\end{corollary}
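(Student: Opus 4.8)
The plan is to read the complexity directly off the nested loop structure of Algorithm 8; the argument for Algorithm 9 is identical, the two procedures sharing the same shape and differing only in whether the matrix difference uses exponent $n-2$ or $n-1$. I would separate the total cost into three multiplicative contributions: the number of times the outer condition fires, the number of iterations of the inner \textbf{while} loop, and the per-step cost of evaluating the booleanized matrix-power conditions. The target bound $O(x \cdot \delta n^2 \log_2^2(m) \cdot \ell)$ then appears as the product of these three factors.

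The first step is to account for the outer double loop over $i$ and $j$. Although it iterates $O(n^2)$ times, the body that enters the \textbf{while} loop only executes when $(\beta(A(\Gamma)^{\mathfrak{D}(\Gamma)}) - \beta(A(\Gamma)^{\mathfrak{D}(\Gamma)-2}))_{i,j} = 1$. By Corollary 1.1 (respectively Corollary 4.1) each such index pair $(i,j)$ is the pair of endpoints of a longest path, and since the path between a fixed pair of vertices is unique in a tree or a DAG, these firings are in bijection with the longest paths. Hence the expensive body executes exactly $x$ times, contributing the factor $x$; the remaining $O(n^2)$ condition checks are cheap once the two relevant booleanized powers are available and do not dominate.

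Next I would bound the inner \textbf{while} loop. It starts from $n = \mathfrak{D}(\Gamma) - 1 = \ell - 1$ and decrements $n$ by one each pass until $n \le 1$, so it runs $O(\ell)$ times, contributing the factor $\ell$. Each pass evaluates a difference $(\beta(A(\Gamma)^{n}) - \beta(A(\Gamma)^{n-2}))_{s,k}$, whose constituent booleanized powers are produced, via the binary-exponentiation Remark, at cost $O(\delta n^2 \log_2(m))$ each; charging each pass the dominant Corollary 4.3 cost $O(\delta n^2 \log_2^2(m))$ and multiplying the three contributions yields the claimed bound.

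The step I expect to require the most care is precisely this per-pass charge, since it is where the bound is deliberately loose. A careful reuse of the powers already computed in decreasing order of $n$ would give only $O(\delta n^2 \log_2(m))$ per \textbf{while} pass, so I would make explicit that the stated estimate is an upper bound obtained by charging each matrix-power evaluation the full diameter-search cost of Corollary 4.3. I would also confirm that the one-time computation of $\mathfrak{D}(\Gamma)$ at the outset, itself $O(\delta n^2 \log_2^2(m))$ by Corollary 4.3, is an additive term dominated by the product $x \cdot \ell \cdot \delta n^2 \log_2^2(m)$, so that including it leaves the final asymptotic expression unchanged.
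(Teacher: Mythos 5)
Your decomposition is essentially the paper's own: the factor $x$ from the endpoint pairs that fire the outer condition, the factor $\ell$ from the backward \textbf{while} loop, the per-step charge of $\delta n^2 \log_2^2(m)$ for the booleanized power evaluations, and the observation that the additive $O(n^2)$ endpoint-scanning cost is dominated. Your remarks on the looseness of the per-pass charge and on the one-time diameter computation are sensible refinements, but the argument and the resulting bound coincide with the paper's proof.
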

\begin{proof}$ $\newline
    The structure of the algorithm is to first find end-points of a longest path, then by using Corollary 1.1, work backwards to generate the rest of the vertices for the entire path and repeat for all possible longest paths. To check if there are end-points of a longest path, we perform $\frac{n(n+1)}{2}$ operations, once end-points have been identified, we perform a booleanized exponent for the entire length of the longest path to generate the rest of the path, this is repeated $x$ times. This requires $\ell$ exponentiation operations, which means we have an additional $\delta n^2 log_2^2(m) \cdot \ell$ operations for every identified pair of end-points. Giving us 
    $$O\left(\frac{n(n+1)}{2} + x\cdot \delta n^2 log_2^2(m) \cdot \ell\right) = O\left(x\cdot \delta n^2 log_2^2(m) \cdot \ell\right)$$
\end{proof}

As previously discussed, Algorithm 10 is somewhat of a generalization of Algorithm 6, replacing Dijkstra's longest path algorithm on Trees with our algorithm, $Generate\_\mathfrak{D}$ in Algorithm 8, to generate all possible longest paths of the minimal spanning tree of the Block Graph. Hence, we see the required number of operations to be
\begin{corollary}
    Let $\mathcal{L}$ be the size of the longest chain, and $x$ be the number of longest chains. Then, the asymptotic complexity of Algorithm 10, to generate every  longest chain of a Block Graph, is
    $$Generate\_all\_\mathcal{C}_l \in O\left(x\cdot (\delta n^2 log_2^2(m) + mlog(m))\right)$$
\end{corollary}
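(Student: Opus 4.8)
The plan is to read Algorithm 10 line by line, attach to each structural component the complexity already proved for its sub-procedure, and then add and simplify. First I would handle the single top-level call $Kruskal\_MST(\Gamma)$, which costs $O(m\log m)$. Before doing any arithmetic, though, I would invoke the structural fact from the Remark after Theorem 3: the longest chains of $\Gamma$ are in bijection with the longest paths of the spanning tree $T$. This identification is the hinge of the whole argument, because it lets me say that the number of longest paths of $T$ is exactly the number $x$ of longest chains, and that their common length is $\mathcal{L}$ (up to the off-by-one bookkeeping between counting blocks and counting edges). Without it, the tree-indexed quantities in Corollary 4.4 could not be re-expressed in terms of $x$ and $\mathcal{L}$.

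Next I would bound the two remaining components. The call $Generate\_\mathfrak{D}(T)$ on line 3, by Corollary 4.4 applied to $T$ with $x$ longest paths of length $\ell=\mathcal{L}$, contributes $O(x\,\delta n^2\log_2^2(m)\,\mathcal{L})$; this is the source of the $x\,\delta n^2\log_2^2(m)$ summand in the target. The for-loop on lines 5--7 iterates once per generated path, i.e. $x$ times, and each pass runs $Generate\_\mathcal{C}_l$, whose cost is $O(n+m\log m+\mathcal{L})$ by Corollary 4.3; the repeated internal Kruskal invocation is exactly what turns the single $m\log m$ into the $x\,m\log m$ summand. Adding the three pieces gives
$$
O\!\left(m\log m + x\,\delta n^2\log_2^2(m)\,\mathcal{L} + x\,(n+m\log m+\mathcal{L})\right),
$$
after which I would discard dominated terms: since $\mathcal{L}\le n$ and $n=O(m)$ for connected graphs, the $x(n+\mathcal{L})$ and bare $m\log m$ contributions fall under the retained summands, collapsing the bound to $O\!\left(x(\delta n^2\log_2^2(m)+m\log m)\right)$.

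The hard part will be the stray length factor $\mathcal{L}$ riding on the $Generate\_\mathfrak{D}(T)$ term. Corollary 4.4 genuinely carries an $\ell$ multiplier, so a verbatim substitution produces $x\,\delta n^2\log_2^2(m)\,\mathcal{L}$, not the target's factor-free term, and since $\mathcal{L}$ can be as large as $\Theta(n)$ this is not a dominated quantity. To reach the stated bound I would argue that the $\mathcal{L}$ booleanized exponentiations needed to reconstruct a path are computed once and reused across all $x$ backward traversals sharing the same matrix powers, so that the $\mathcal{L}$ factor is not repaid per path; equivalently, I would re-read Algorithm 8's inner \textbf{while} as reusing precomputed powers rather than recomputing them. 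Making this amortization precise, and confirming that Algorithm 10 as written actually realizes it, is the sole non-mechanical step; the rest is term-by-term addition and the dropping of dominated summands.
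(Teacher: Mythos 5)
Your decomposition is exactly the paper's: Kruskal at $O(m\log m)$, then Algorithm 8 on the spanning tree at $O(x\cdot\delta n^2\log_2^2(m)\cdot\ell)$ via the corollary on $Generate\_\mathfrak{D}$, then $x$ invocations of $Generate\_\mathcal{C}_l$ at $O(n+m\log m+\mathcal{L})$ each, with $\ell=\mathcal{L}$ identified through the remark relating longest chains of $\Gamma$ to longest paths of $T$. The intermediate sum you write down, $O\bigl(m\log m + x\,\delta n^2\log_2^2(m)\,\mathcal{L} + x(n+m\log m+\mathcal{L})\bigr)$, is verbatim the paper's (your corollary numbers are shifted by two --- the Algorithm 8 bound is Corollary 4.6 and the Algorithm 6 bound is Corollary 4.4 --- but the contents you attribute to them are right).

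The one place you diverge is the place that matters: the paper's proof simply asserts that this sum equals $O\bigl(x(\delta n^2\log_2^2(m)+m\log m)\bigr)$, silently discarding the factor $\mathcal{L}$ on the dominant term. You are right to refuse to do this. Since $\mathcal{L}$ can be $\Theta(n)$, the stated bound does not follow from the cited sub-corollaries by term-by-term substitution, and the ``hard part'' you isolate is a genuine gap in the paper's own argument, not a defect of your proposal. Your suggested repair --- precomputing the $\mathcal{L}$ booleanized powers once and amortizing them across the $x$ backward traversals --- is a plausible way to recover the claimed bound, but Algorithm 8 as written recomputes $\beta(A(\Gamma)^n)$ inside its while loop, so you would have to modify the algorithm (or restate the corollary with the extra $\mathcal{L}$ factor) rather than merely re-read it. In short: your mechanical steps reproduce the paper; your skepticism at the final simplification is the correct reaction to a step the paper does not justify.
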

\begin{proof}$ $\newline
    First we begin with Kruskal's minimal spanning tree algorithm which is of the order $O(mlog(m))$. Then, we perform Algorithm 8 on the minimal spanning tree. As per Corollary 4.6, it is of the order $O(x\cdot \delta n^2 log_2^2(m) \cdot \ell)$. Then, we perform Algorithm 6 on every longest path generated from the minimal spanning tree, giving us an additional $x \cdot (n + mlog(m) + \mathcal{L})$. Given that the length of the longest path is equal to the size of the longest chain, we have $\ell = \mathcal{L}$. Hence, we find the order to be
    $$O\left(mlog(m) + x\cdot \delta n^2 log_2^2(m) \cdot \mathcal{L} + x \cdot (n + mlog(m) + \mathcal{L})\right) = O\left(x\cdot (\delta n^2 log_2^2(m) + mlog(m))\right)$$
\end{proof}
For our final algorithm, we must note that it is not possible to perform it in polynomial time, due to the fact that the number of possible longest paths increases exponentially by the order of each block. Furthermore, the asymptotic complexity will be computed assuming an upper bound so that every block's order is of the maximum size, in particular $\omega(\Gamma)$. Certainly, the only way to reach this upper bound is with Uniform Block Graphs as a special case.
\begin{corollary}
    Let $\mathcal{L}$ be the length of the longest chain, and let $x$ be the number of longest chains. Then the asymptotic complexity of Algorithm 11, to generate every longest path of a Block Graph, is
    $$Generate\_all\_\mathscr{LP} \in O\left(x\cdot (\delta n^2 log_2^2(m) + mlog(m) + \mathcal{L}\cdot\omega(\Gamma)!)\right)$$
\end{corollary}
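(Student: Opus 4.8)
The plan is to decompose Algorithm 11 into its three structural phases and bound each separately, then observe that the permutation phase dominates. First I would invoke Corollary 4.8 directly: the opening call $Generate\_all\_\mathcal{C}_l(\Gamma)$ produces every longest chain at cost $O(x\cdot(\delta n^2 \log_2^2(m) + m\log(m)))$, which already accounts for the spanning-tree construction via Kruskal's algorithm, the backward path generation of Algorithm 8 on the minimal spanning tree, and the per-chain block identification. This term survives unchanged into the final bound, so no further work is needed on it beyond citing the prior corollary.

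Next I would analyze the outer loop over the $x$ longest chains together with the inner block-traversal that assembles one canonical longest path per chain. For a fixed chain $\mathcal{C}_k = (B_1, \ldots, B_\mathcal{L})$, locating the minimal-degree starting vertex in $B_1$ scans $|\mathbb{V}(B_1)| \le \omega(\Gamma)$ vertices, and the subsequent loop over the $\mathcal{L}$ blocks performs, for each block, a degree scan to find the vertex separator followed by an append of every non-separator vertex, both $O(\omega(\Gamma))$ per block. Thus constructing one canonical longest path per chain costs $O(\mathcal{L}\cdot\omega(\Gamma))$, and summing over all $x$ chains gives $O(x\cdot\mathcal{L}\cdot\omega(\Gamma))$. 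I would note that this contribution is asymptotically absorbed by the factorial term below, since $\omega(\Gamma) \le \omega(\Gamma)!$.

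The decisive phase is $Permute\_non\_vertex\_separator\_elements(\gamma)$, and here I would reuse the counting formula derived immediately before the statement of Algorithm 11. The number of longest paths produced from a single chain equals
$$(|\mathbb{V}(B_1)| - 1)! + (|\mathbb{V}(B_\mathcal{L})| - 1)! + \sum_{i = 2}^{\mathcal{L} - 1}(|\mathbb{V}(B_i)| - 2)!,$$
a sum of exactly $\mathcal{L}$ factorial terms. Applying the stated worst-case upper bound $|\mathbb{V}(B_i)| \le \omega(\Gamma)$ to every block bounds each summand by $\omega(\Gamma)!$, yielding at most $\mathcal{L}\cdot\omega(\Gamma)!$ paths per chain and $O(x\cdot\mathcal{L}\cdot\omega(\Gamma)!)$ paths across all chains. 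Combining the three phases and absorbing the lower-order $x\cdot\mathcal{L}\cdot\omega(\Gamma)$ construction term into the factorial term gives
$$O\left(x\cdot(\delta n^2\log_2^2(m) + m\log(m)) + x\cdot\mathcal{L}\cdot\omega(\Gamma)!\right) = O\left(x\cdot(\delta n^2\log_2^2(m) + m\log(m) + \mathcal{L}\cdot\omega(\Gamma)!)\right),$$
as claimed.

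The main obstacle I anticipate is charging the permutation phase correctly by the path-count rather than by some larger quantity. One must argue that each generated path is emitted in time bounded by its own length $O(\mathcal{L}\cdot\omega(\Gamma))$, so that the total enumeration cost is the path-count times a polynomial factor that is itself dominated by $\omega(\Gamma)!$; and one must confirm that the sum-of-factorials expression genuinely counts the distinct outputs of the procedure. Care is also needed to stress that the bound $\omega(\Gamma)!$ is honestly an upper bound for general Block Graphs and is approached only by Uniform Block Graphs, where every block attains the maximal clique order $\omega(\Gamma)$; this is exactly the special case flagged in the remark preceding the corollary.
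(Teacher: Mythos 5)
Your proposal follows essentially the same decomposition as the paper's own proof: cite the complexity of Algorithm 10 for the chain-generation phase, bound the per-chain path construction by $x(\omega(\Gamma)+\mathcal{L}\cdot\omega(\Gamma))$ and absorb it, and bound the permutation count by $\mathcal{L}\cdot\omega(\Gamma)!$ per chain, arriving at the identical final expression (your bounding of each of the $\mathcal{L}$ factorial summands by $\omega(\Gamma)!$ is just a slightly more direct route to the same term the paper obtains from its exact count $2(\omega(\Gamma)-1)!+(\mathcal{L}-2)(\omega(\Gamma)-2)!$). The only slip is a citation off-by-one --- the complexity of $Generate\_all\_\mathcal{C}_l$ is Corollary 4.7, not 4.8 --- and your closing caveat about charging emission cost per path is a reasonable point of care that the paper itself elides.
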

\begin{proof}$ $\newline
    Firstly, the algorithm begins with the application of Algorithm 10, which is found to be of the order $O(x\cdot (\delta n^2 log_2^2(m) + mlog(m)))$ as per Corollary 4.7. Next, we iterate over every chain giving us a factor of $x$ for the following orders. For every chain, we iterate over every vertex of the first block. Assuming an upper bound where every block is of the order $\omega(\Gamma)$, we have an addition $\omega(\Gamma)$ term, then we append every vertex of every block of the chain. With $\mathcal{L}$ blocks and $\omega(\Gamma)$ vertices in every block, we add an overall factor of $x(\omega(\Gamma) + \mathcal{L}\cdot (\omega(\Gamma)))$. Finally, we perform permutations on non-vertex separator elements which provides us with $x \cdot \left( 2(\omega(\Gamma) - 1)! + (\mathcal{L}-2)\cdot(\omega(\Gamma) - 2)!\right)$ operations. We take all these factors together to give us an order of 
    \begin{align*}
        &O\left(x\cdot (\delta n^2 log_2^2(m) + mlog(m)) + x(\omega(\Gamma) + \mathcal{L}\cdot (\omega(\Gamma))) + x \cdot\mathcal{L}\cdot\omega(\Gamma)!\right)\\
        & \hspace{8cm} =  O\left(x\cdot (\delta n^2 log_2^2(m) + mlog(m) + \mathcal{L}\cdot\omega(\Gamma)!)\right)
    \end{align*}
\end{proof}

This completes the analysis of asymptotic complexity of the algorithms provided. 
%\section{Comparison?}
\section{Concluding Remarks}
Overall, we have found exact algebraic solutions to find the longest path of tree-like classes of graphs. This gave rise to polynomial time algorithms that compute the length of the longest path. Through proving correctness of the algorithms, we found new algorithms to generate every possible longest path of Trees, Directed Acyclic Graphs, Uniform Block graphs, and Block Graphs which is not the case with existing pure algorithmic solutions. The algebraic approach presents itself as a promising method of efficiently solving \textbf{LPP} on classes of graphs for which a polynomial solution does not exist yet. It is possible to use this technique to attempt to find an efficient solution to \textbf{LPP} for a new class of graphs with an immediate and straightforward method to guarantee correctness. 

In general, algebraic algorithms are guaranteed to be correct given a proof of the algebraic condition it is based on, as it is what leads to the discovery of the algorithm. Hence, by defining operations which are computable in efficient time, any polynomial sequence of operations converging to the final algebraic condition would be guaranteed to solve the problem in efficient time. While not as critical as time complexity, given that a simple matrix datastructure is used to model the algebraic objects, space complexity is also in polynomial order. Besides, unlike other methods to solve \textbf{LPP}, the algebraic operations lead to new methods to generating the longest path itself. Not only a single longest path, nor only every path of the longest length, but also any path of any length. Therefore, suggesting the ability to not only solve \textbf{HPP} in polynomial time for a class of graphs, but also generate the specific hamiltonian path to traverse the graph.
\\\\
\textbf{Acknowledgements.} I would like to express my immense gratitude to Dr. Ayman Badawi for valuable discussion, consultation, mentorship, and deliberation through this work.
\\\\
\textbf{Data availability.} No new data has been generated or analyzed in this paper.
\\\\
\textbf{Conflict of interest.} The author does not have any known conflict of interest.

\end{document}